\newcommand{\va}{{\boldsymbol a}}
\newcommand{\vX}{{\boldsymbol X}}
\newcommand{\vU}{{\boldsymbol U}}
\newcommand{\vV}{{\boldsymbol V}}
\newcommand{\vj}{{\boldsymbol j}}
\newcommand{\vJ}{{\boldsymbol J}}
\newcommand{\vz}{{\boldsymbol z}}
\newcommand{\vp}{{\boldsymbol p}}
\newcommand{\vq}{{\boldsymbol q}}
\newcommand{\bv}{{\boldsymbol v}}   
\newcommand{\vA}{{\boldsymbol A}}
\newcommand{\vR}{{\boldsymbol R}}
\newcommand{\vB}{{\boldsymbol B}}
\newcommand{\vg}{{\boldsymbol g}}
\newcommand{\vH}{{\boldsymbol H}}
\newcommand{\vM}{{\boldsymbol M}}
\newcommand{\vG}{{\boldsymbol G}}
\newcommand{\vS}{{\boldsymbol S}}
\newcommand{\vI}{{\boldsymbol I}}
\newcommand{\vx}{{\boldsymbol x}}
\newcommand{\vr}{{\boldsymbol r}}
\newcommand{\vs}{{\boldsymbol s}}
\newcommand{\vu}{{\boldsymbol u}}
\newcommand{\ve}{{\boldsymbol e}}
\newcommand{\vh}{{\boldsymbol h}}
\newcommand{\vzero}{{\boldsymbol 0}}
\newtheorem{lemma}{Lemma}
\newtheorem{theorem}{Theorem}
\begin{document}
%
\title{Solving Complex Quadratic Systems \\with Full-Rank Random Matrices}
%
%
%

\author{Shuai~Huang,
        Sidharth~Gupta,
        and~Ivan~Dokmani\'c,~\IEEEmembership{Member,~IEEE}
        \thanks{\copyright 2020 IEEE. Personal use of this material is permitted. Permission from IEEE must be obtained for all other uses, in any current or future media, including reprinting/republishing this material for advertising or promotional purposes, creating new collective  works,  for  resale  or  redistribution  to  servers  or  lists,  or  reuse  of  any  copyrighted  component  of  this  work  in  other works.}\thanks{This work is supported by National Science Foundation under Grant CIF-1817577. The authors are with the Coordinated Science Laboratory, University of Illinois at Urbana-Champaign, Urbana, IL 61801 (e-mail: shuai.huang@emory.edu, gupta67@illinois.edu, dokmanic@illinois.edu).}
}
%
%


\maketitle

\begin{abstract}
We tackle the problem of recovering a complex signal $\vx\in\mathbb{C}^n$ from quadratic measurements of the form $y_i=\vx^*\vA_i\vx$, where $\vA_i$ is a full-rank, complex random measurement matrix whose entries are generated from a rotation-invariant sub-Gaussian distribution. We formulate it as the minimization of a nonconvex loss. This problem is related to the well understood phase retrieval problem where the measurement matrix is a rank-1 positive semidefinite matrix. Here we study the general full-rank case which models a number of key applications such as molecular geometry recovery from distance distributions and compound measurements in phaseless diffractive imaging. Most prior works either address the rank-1 case or focus on real measurements. The several papers that address the full-rank complex case adopt the computationally-demanding semidefinite relaxation approach. In this paper we prove that the general class of problems with rotation-invariant sub-Gaussian measurement models can be efficiently solved with high probability via the standard framework comprising a spectral initialization followed by iterative Wirtinger flow updates on a nonconvex loss. Numerical experiments on simulated data corroborate our theoretical analysis.
\end{abstract}

\begin{IEEEkeywords}
Complex quadratic equations, sub-Gaussian matrices, rotation invariance, spectral initialization.
\end{IEEEkeywords}

%
\IEEEpeerreviewmaketitle

\section{Introduction}
\label{sec:intro}
\IEEEPARstart{S}{ystems} of quadratic equations model many problems in applied science, including phase retrieval \cite{Fienup:82,Millane:90,Shechtman2015:PRRev,Jaganathan2015PhaseRA}, the unlabeled distance geometry problem (uDGP) \cite{DUXBURY2016117,Huang2018uDGP}, the turnpike and beltway problems \cite{Dakic2000TP,Lemke2003,Huang2018uDGP}, unknown view tomography \cite{Basu2000UVT,Zehni2019UVT,Zehni2020UVT}, blind channel estimation \cite{Ahmed2014BCE,Ranieri2013PRSparse}, power flow analysis and power system state estimation \cite{Wang2017power}. Phase retrieval, in particular, has motivated considerable recent research on quadratic equations. The phaseless measurements are given as $y_i = |\va_i^*\vx|^2 = \vx^* \va_i \va_i^* \vx$, with the measurement matrices $\va_i\va_i^*$ being rank-1 positive semidefinite matrices. In this paper we study a different measurement model with full-rank measurement matrices. Such measurements arise in a number of the aforementioned applications.

In combinatorial optimization problems such as the uDGP\cite{DUXBURY2016117,Huang2018uDGP} and the nanostructure problem \cite{Juhas:2006,Billinge2010_nanostructure}, the goal is to reconstruct the relative locations of a set of points from their unlabeled pairwise distances. The distribution $y_i$ of the distance $i$ can be formulated as a quadratic form with respect to the point density \cite{Huang2018uDGP}, with the measurement matrix being high-rank. Unknown view tomography aims to reconstruct a 3D density map from a collection of 2D projection images with unknown view angles. When the view angles are assumed to be uniformly distributed on the unit sphere, rotation invariant features can be estimated from 2D projection images and later used for reconstruction \cite{Zehni2020UVT}. Many of these features can be written as quadratic forms with respect to high-rank measurement matrices.

A variety of combinatorial graph problems can be formulated as quadratic problems with high-rank measurement matrices \cite{helmberg1998solving}, including the problem of finding the minimum energy spin configuration of atoms located on a grid in quantum physics \cite{poljak1995maximum}, and the problem of minimizing the number of connections between layers of a circuit board in very-large-scale-integrated (VLSI) circuit design \cite{barahona1988application}. Furthermore, in an electric transmission network consisting of nodes (buses) and edges (transmission lines), power flow analysis tries to compute the complex voltage at all nodes given measured or specified system variables at selected nodes and edges. This task is then cast as solving a system of quadratic equations with respect to the complex voltages \cite{Wang2017power} with the measurement matrix being sparse and with rank greater than one. For noisy measurements the task is known as power system state estimation.

These problems can be modeled as systems of quadratic equations where the measurement matrices are not necessarily rank-1 or real. Recovery of a signal from its complex quadratic measurements is naturally formulated as a nonconvex optimization problem, where solving for the globally optimal solution is in general intractable. Recent works on nonconvex quadratic problems such as phase retrieval \cite{Netrapalli2015:RPAM,Candes2015PhaseRV}, phase synchronization \cite{NoncvxPhaseSync2016,PhaseSync2017}, and low-rank matrix recovery \cite{Chen2015FastLE} have shown that a globally optimal solution can be recovered from sufficient measurements with high probability when iid Gaussian measurement vectors or matrices are used. Light transport in random media can be modeled by an iid complex Gaussian matrix, and dedicated hardwares like optical processing units (OPU) have been used to produce rapid (20kHz) random projections of high-dimensional data in the million range \cite{OPU:2019}. In the calibration experiment where the calibration signals are controlled and known, the phases of the calibration measurements can be recovered and used to estimate the complex Gaussian measurement vectors \cite{OPU:2020}. This application bridges the gap between well-understood random measurement theories and real applications. It further motivates the question of what other types of measurement vectors or matrices also enjoy such favorable properties. In particular, extending the Gaussian measurement model to the sub-Gaussian case has recently attracted considerable interests \cite{Chen:Quad2015,Krahmer_2018,gao2019phase,Krahmer:2019}. Deriving the theory for other ensembles such as Bernoulli \cite{Krahmer_2018,Krahmer:2019} has the potential to explain a wider range of applications.

Building upon our earlier work on random Gaussian measurements in \cite{QE:Gaussian:2019}, we show that our results hold for a slightly larger class of measurement matrices which we term rotation-invariant sub-Gaussian matrices. In this case the signal of interest $\vx\in\mathbb{C}^n$, the full-rank sub-Gaussian measurement matrices $\vA_i\in\mathbb{C}^{n\times n}$, the measurement $y_i\in\mathbb{C}$ are all in the complex domain and we have
\begin{align}
\label{eq:quad_eqs}
y_i=\vx^*\vA_i\vx, \quad i=1,\ldots,m\,.
\end{align}
We propose to recover a globally optimal solution via the standard framework comprising a spectral initialization and iterative Wirtinger flow (WF) updates. We prove that when the number of measurements $m$ exceeds the signal length $n$ by some sufficiently large constant $C$, i.e. $m>Cn$, the signal $\vx$ can be recovered up to a global phase shift with high probability. 

\subsection{Prior art}
Similar quadratic equation problems have been studied in other contexts. Cand\`{e}s et al. \cite{Candes2015PhaseRV} cast the phase retrieval problem as a system of structured quadratic equations and solved it via WF with a linear convergence rate. As this is a non-convex problem, they used a suitably constructed spectral initializer, $\vz^{(0)}$, for the Gaussian measurement model. Spectral initialization for phase retrieval was originally proposed in \cite{Netrapalli2015:RPAM}. It produces $\vz^{(0)}$ which is close to a globally optimal solution with high probability when sufficient measurements are available. The works of \cite{Chen2015Trunc,Wang2018Trunc} subsequently showed that adapting the loss and truncating the measurements adaptively in the initialization and gradient stages could lead to improved performance. Although the spectral initialization was originally developed for random Gaussian measurements, it can also be adapted to work with other types of measurements \cite{Yurtsever:2017,Huang2018uDGP}.

Additionally, as shown in the proofs of the WF approach \cite{Candes2015PhaseRV}, in phase retrieval some of the entries of the measurement matrix are correlated. This makes it impossible to use some of the well-established random matrix theory \cite{vershynin_2012}. In our measurement model, the matrix entries are pairwise uncorrelated and have zero mean, which leads to a different and much simplified proof to establish the convergence guarantees.

Lu and Li \cite{Lu18Spectral} studied generalizations of spectral initialization in the real case and focused on the asymptotic behavior of the initializer with respect to the sampling ratio $m/n$ in the high-dimensional limit. Moving beyond the Gaussian measurement model, Ghods et al. \cite{Ghods2018LinearSE} proposed a linear spectral estimator for general nonlinear measurement systems. The works of Wang and Xu \cite{WANG2017,XU2018497} addressed a generalized phase retrieval problem where $\vA_i$ is a Hermitian matrix. They used algebraic methods \cite{CONCA2015346} to find the number of measurements needed for a successful recovery. Here we build upon these results and show that both the initialization and convergence proofs can be derived using the Bernstein-type inequalities for the full-rank rotation-invariant sub-Gaussian measurement model in \eqref{eq:quad_eqs}.

Solving systems of quadratic equations is closely related to low-rank matrix recovery---it is equivalent to recovering a rank-1 positive semidefinite (PSD) matrix $\vX=\vx\vx^*$ with $y_i = \langle \vA_i, \vX \rangle$ \cite{CANDES2015CDF, candes2015phase}. The nonconvex low-rank constraint on $\vX$ can be relaxed to the convex minimum nuclear norm constraint. Alternatively, Carlsson and Gerosa \cite{Carlsson_2019} proposed to directly search for a PSD matrix with known rank $k$ and used it to perform phase retrieval from Fourier measurements. The works of \cite{Recht2010NNM,ELDAR2012309,Kabanava2016NSP,kueng2017low} focused on establishing sufficient conditions that warrant such a semidefinite convex relaxation. However, recovering the relaxed $\vX$ is computationally demanding even for moderate-size problems. To address this issue, Yurtsever et al. \cite{yurtsever2017sketchy} modified the conditional gradient method by performing a small random sketch of $\vX$ that can be used to recover $\vX$ later. On the other hand, the works in \cite{Chen2015FastLE,NIPS2015_5733,NIPS2015_5830,Tu2016Procrustes} approached the problem directly by factoring the unknown matrix as $\vM=\vU\vV^T$, where $\vU\in\mathbb{R}^{n_1\times k}, \vV\in\mathbb{R}^{n_2\times k}$ and $\vM\in\mathbb{R}^{n_1\times n_2}$ are all real matrices, and searching for $\vU$ and $\vV$ instead. When $\vM$ is positive semidefinite and the measurement operator satisfies the restricted isometry property \cite{RIP_LR:2010}, Zheng and Lafferty focused on the Gaussian measurement model and proved that $\vM$ could be recovered with guarantees \cite{NIPS2015_5830}. Tu et al. improved upon \cite{NIPS2015_5830} and proposed the Procrustes Flow approach for the more general case where $n_1\neq n_2$ \cite{Tu2016Procrustes}. Our contribution to this line of work lies not just in going from the real case to the complex case in the rotation-invariant sub-Gaussian model (which requires technical interventions at every step), but also in establishing new proofs based on (and including) Lemma \ref{lemma:spectral_norm_concen_all} that offer a simpler way to obtain the recovery guarantees.

Furthermore, recent works on low-rank matrix recovery showed that the nonconvex problem enjoys a globally optimal landscape \cite{Srinadh:2016:GlobalOpt,Park:2017,Ge:2017:Geometric,Li:2018:NonconvexGeo}. For low-rank square, PSD matrix recovery, Bhojanapalli et al. \cite{Srinadh:2016:GlobalOpt} proved that there are no spurious local minima when using incoherent linear measurements defined by the RIP condition. In this case, a global convergence guarantee can be obtained for stochastic gradient descent from random initialization. Park et al. \cite{Park:2017} later generalized it to the non-square case. Via a geometric analysis on the Hessian of the objective function, Ge et al. \cite{Ge:2017:Geometric} proposed a way to find directions in which local minima can be ``improved''. Recent developments on using various nonconvex optimization approaches to solve low-rank matrix factorization can be found in the review paper by Chi et al. \cite{NonconvexLR}.

In this work we complement these results by studying the complex signal recovery problem: we aim to recover a complex signal $\vx\in\mathbb{C}^n$ from its complex quadratic measurements up to a global phase shift. 

\subsection{Paper outline}
The paper is organized as follows. In Section \ref{sec:spectral_init}, we extend the derivations from \cite{Netrapalli2015:RPAM} to the rotation-invariant sub-Gaussian measurement model. We show that the spectral initialization concentrates around a global optimum with high probability and compute the associated concentration bounds. In Section \ref{sec:convergence_analysis}, we analyze the regularity condition and derive new results for the rotation-invariant sub-Gaussian measurement model. The two results are then combined to give the main theorem of this paper. Computational experiments are presented in Section \ref{sec:experiments}. The proofs of the lemmas are given in the Appendix.

\section{Problem formulation}
\subsection{Rotation-invariant sub-Gaussian measurement model}
For convenience, let $\vr_i\in\mathbb{R}^{2n^2}$ denote the real and imaginary coefficients of the entries of $\vA_i$. The $m$ coefficient vectors $\vr_i$ for $1 \leq i \leq m$ are independent and identically distributed following a multivariate rotation-invariant sub-Gaussian distribution \cite{Bryc1995,vershynin_2012}.
\begin{itemize}
    \item The distribution of $\vr_i$ does not change under unitary transforms \cite{Bryc1995}. It follows that the probability density function $p\left(\vr_i\right)$ depends only on the norm $\|\vr_i\|_2$. Without loss of generality, we also assume $\mathbb{E}[r_{ik}^2]=1,\ \forall\ k=1,\cdots,2n^2$.
    \item The coefficient vector $\vr_i$ is a sub-Gaussian random vector \cite[Definition 5.22]{vershynin_2012} such that the one-dimensional marginals $\vg^T\vr_i$ are sub-Gaussian random variables for all $\vg \in \mathbb{R}^{2n^2}$. In particular, every single entry $r_{ik}$ is also sub-Gaussian by definition.
\end{itemize}
We shall refer to the model given by \eqref{eq:quad_eqs} as the rotation-invariant sub-Gaussian measurement model, which is a generalization of the Gaussian measurement model from our earlier work \cite{QE:Gaussian:2019}. Some examples of matrices in this model are given in Section \ref{sec:rot_exp_family}.

Instead of solving the system of quadratic equations in \eqref{eq:quad_eqs} directly, we formulate it as a minimization problem. Namely, we minimize the following loss function $f(\vz)$ to obtain the recovered signal $\vz$:
\begin{align}
\label{eq:first_obj}
\begin{split}
f(\vz) = \frac{1}{m}\textstyle\sum\limits_{i=1}^m \left|\vz^*\vA_i\vz-y_i\right|^2\,.
\end{split}
\end{align}
Clearly, for any solution $\vz_0$ to \eqref{eq:quad_eqs} we have $f(\vz_0) = 0$. Although we do not prove it here, one can expect that with sufficiently many ``generic'' measurements the map from $\vz$ to $[y_1, \ldots, y_m]^T$ is injective up to a global phase. Such results have been rigorously proven for the phase retrieval problem \cite{balan2006signal, conca2015algebraic}.

Suppose we are given a ``good'' initialization point $\vz^{(0)}$ (finding such a point is discussed in Section \ref{sec:spectral_init}). The solution is then updated iteratively via \emph{Wirtinger flow} (WF):
\begin{align}
\label{eq:grad_des_ud}
\vz^{(t+1)} = \vz^{(t)}-\eta\nabla f(\vz)\,,
\end{align}
where $\eta>0$ is some suitable step size, and $\nabla f(\vz)$ can be computed as
\begin{align}
\label{eq:gradient_1}
\begin{split}
\nabla f(\vz) =\left(\frac{\partial f}{\partial \vz}\right)^*=\frac{1}{m}\sum_{i=1}^m&\left(\vz^*\vA_i^*\vz-\vx^*\vA_i^*\vx\right)\vA_i\vz\\
&+\left(\vz^*\vA_i\vz-\vx^*\vA_i\vx\right)\vA_i^*\vz\,.
\end{split}
\end{align}
If $\vx$ is a global minimum of $f(\vz)$, then $\vx e^{\vj\phi}$ is also a global minimum for any $\phi\in(0,2\pi]$. Consequently, it is standard to define the squared distance between the recovered solution $\vz$ and the true solution $\vx$ as
\begin{align}
\label{eq:compute_min_dist}
\begin{split}
\mathrm{dist}^2\left(\vz, \vx\right)&=\min_{\phi\in(0,2\pi]}\left\|\vz-\vx e^{\vj\phi}\right\|_2^2\\
&=\|\vz\|_2^2+\|\vx\|_2^2-2|\vz^*\vx|\,,
\end{split}
\end{align}
where $\vz^* \vx = | \vz^* \vx | e^{\vj \phi_{\vz^* \vx}}$ and the minimum is achieved when $\phi = \phi_{\text{min}} := -\phi_{\vz^*\vx}$.

\subsection{Spectral initialization}
\label{sec:spectral_init}
Spectral initialization is widely used in problems with quadratic measurements to obtain an initialization that is close to a global optimum. Similar to \cite{Netrapalli2015:RPAM,Candes2015PhaseRV}, we show that the spectral initializer, $\vz^{(0)}$, is close to a global optimizer $\vx$ with high probability and can be used to initialize the WF update in \eqref{eq:grad_des_ud}. The rationale behind the spectral initialization strategy is that we can get a good estimate $\vS$ of $2\vx\vx^*$ using sufficient measurements. The spectral initializer $\vz^{(0)}$ can then be constructed from the leading left or right singular vectors of $\vS$.

\begin{figure*}[tpb]
    \centering
    \includegraphics[width=0.8\textwidth]{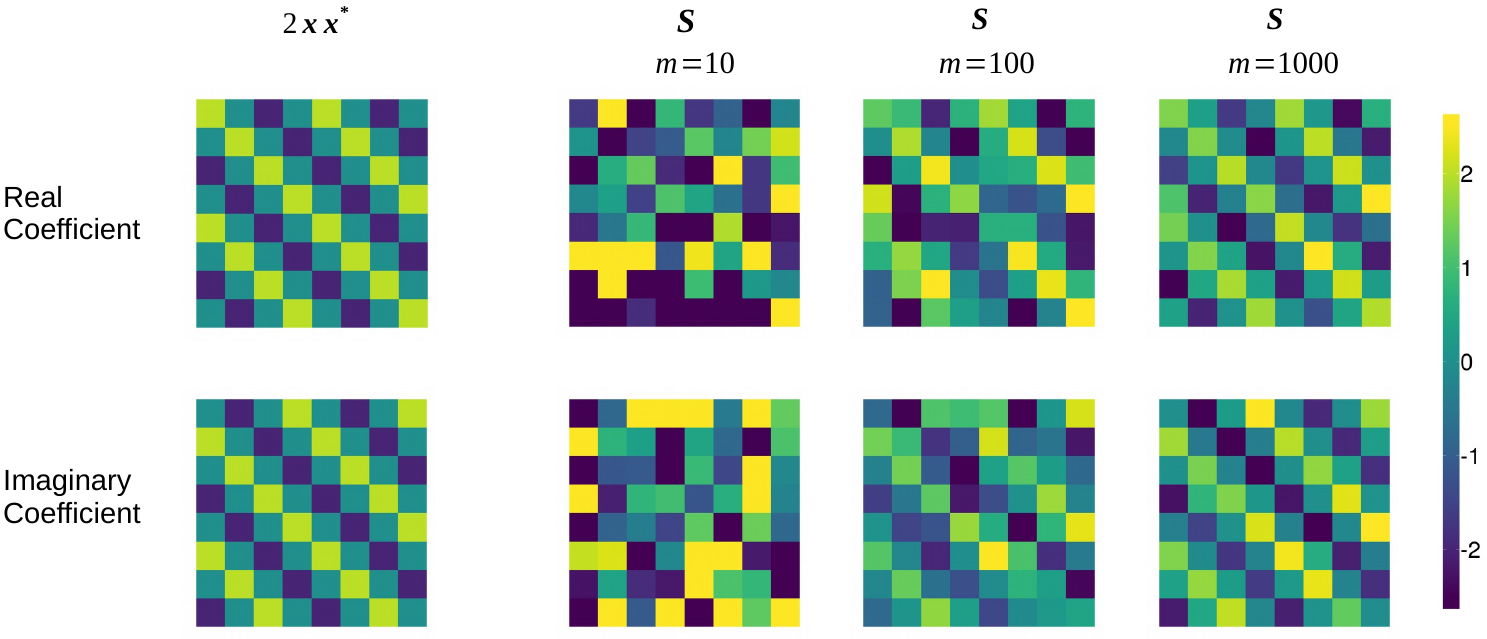}
    \caption{The complex signal to be recovered is $\vx=\left[1\ e^{\frac{\pi}{2}\vj}\ e^{\pi\vj}\ e^{\frac{3\pi}{2}\vj}\ 1\ e^{\frac{\pi}{2}\vj}\ e^{\pi\vj}\ e^{\frac{3\pi}{2}\vj}\right]^T$. We estimate $2\vx\vx^*$ with varying number of measurements. The estimate $\vS$ becomes increasingly accurate with more measurements.}
    \label{fig:concen_spec}
\end{figure*}

Unlike in the phase retrieval problem, which uses the Hermitian matrix $\frac{1}{m}\sum_{i=1}^my_i\va_i\va_i^*$ as the estimate of $\vI+2\vx\vx^*$ ($\vI$ is the identity matrix), we propose the following estimation of $2\vx\vx^*$ under the rotation-invariant sub-Gaussian measurement model
\begin{align}
\label{eq:spectral_S}
    \vS=\frac{1}{m}\sum_{i=1}^m\overline{y}_i\vA_i\,,
\end{align}
where $\overline{y}_i=\vx^*\vA_i^*\vx$ is the complex conjugate of $y_i$. To understand this intuitively, note that the expectation of $\vS$ is
\begin{align}
\label{eq:expectation_s_first}
\begin{split}
\mathbb{E}\left[\vS\right]=\sum_{rc}x_r\overline{x}_c\cdot\mathbb{E}\left[\frac{1}{m}\sum_{i=1}^m\overline{\vA}_{i,rc}\cdot\vA_i\right]\,.
\end{split}
\end{align}
Let $A_{i,rc}^{(R)}$ and $A_{i,rc}^{(I)}$ denote the real and imaginary coefficients of the $(r,c)$-th entry $\vA_{i,rc}$. Since all the coefficients of the matrix entries in $\vA_i$ are generated from a rotation-invariant distribution, any two coefficients are also rotation-invariant, pairwise uncorrelated and ``by assumption'' have unit variance. Using \cite[Proposition 4.1.1]{Bryc1995}, we have:
\begin{align}
\mathrm{Pr}\left(A_{i, rc}^{(R)}\in\mathbb{R}\right) &= \mathrm{Pr}\left(-A_{i,rc}^{(R)}\in\mathbb{R}\right)\\
\mathrm{Pr}\left(A_{i,rc}^{(R)}A_{i,kl}^{(R)}\in\mathbb{R}\right)&=\mathrm{Pr}\left(-A_{i,rc}^{(R)}A_{i,kl}^{(R)}\in\mathbb{R}\right),
\end{align}
where $(r,c)\neq(k,l)$. It is easy to verify that $\mathbb{E}\left[A_{i, rc}^{(R)}\right]=\mathbb{E}\left[A_{i,kl}^{(R)}\right]=0$ and $\mathbb{E}\left[A_{i,rc}^{(R)}A_{i,kl}^{(R)}\right]=0$ so that
\begin{align}
\label{eq:ri_uncorr}
\mathbb{E}\left[A_{i,rc}^{(R)}A_{i,kl}^{(R)}\right]=\mathbb{E}\left[A_{i, rc}^{(R)}\right]\cdot\mathbb{E}\left[A_{i,kl}^{(R)}\right]=0\,,
\end{align}
meaning that the coefficients $A_{i,rc}^{(R)}$ and $A_{i,kl}^{(R)}$ are pairwise uncorrelated, and so are ``any'' two different coefficients of the entries in $\vA_i$. Using \eqref{eq:ri_uncorr}, it is easy to check that
\begin{align}
\mathbb{E}\left[\overline{\vA}_{i,rc}\cdot\vA_{i,rc}\right]&=2\\
\mathbb{E}\left[\overline{\vA}_{i,rc}\cdot\vA_{i,kl}\right]&= 0\,,
\end{align}
where $r\neq k$ or $c\neq l$. Hence $\mathbb{E}\left[\frac{1}{m}\sum_{i=1}^m\overline{\vA}_{i,rc}\vA_i\right]$ is a matrix with the $(r,c)$-th entry equaling to $2$ and the rest of the entries being zeros. Thus
\begin{align}
\label{eq:exp_spec_init}
\mathbb{E}\left[\vS\right]=2\vx\vx^*\,.
\end{align}
The following lemma implies that the matrix $\vS$ concentrates around $\mathbb{E}\left[\vS\right]$ in spectral norm with high probability when $m$ is sufficiently large.

\begin{lemma}
\label{lemma:spectral_norm_concen}
Under the rotation-invariant sub-Gaussian measurement model given by \eqref{eq:quad_eqs}, for every $\nu > 0$, when the number of measurements satisfies $m > C n$ for some sufficiently large constant $C := C(\nu)$, we have for fixed unit vectors $\vp,\vq\in\mathbb{C}^n$ that
\begin{align}
\label{eq:spectral_norm_concen}
    \left\|\frac{1}{m}\sum_{i=1}^m\vp^*\vA_i^*\vq\cdot \vA_i - 2\vq\vp^* \right\|<\nu\,,
\end{align}
with probability at least $1-20\exp\big(-m\cdot C_1(C,\nu)\big)$, where $C_1(C,\nu)>0$ is some constant depending on $C$ and $\nu$.
\end{lemma}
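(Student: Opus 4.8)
The plan is to read \eqref{eq:spectral_norm_concen} as a concentration inequality for an empirical average of i.i.d.\ random matrices around its mean. Writing $\vM_i := \vp^*\vA_i^*\vq\cdot\vA_i - 2\vq\vp^*$, the same second-moment bookkeeping that produced \eqref{eq:exp_spec_init} shows that $\mathbb{E}[\vM_i]=\vzero$, so the quantity in \eqref{eq:spectral_norm_concen} is exactly $\|\tfrac1m\sum_i\vM_i\|$. First I would reduce the spectral norm to a supremum of scalar quadratic forms over a net: using the standard estimate $\|\vM\|\le(1-2\epsilon)^{-1}\max_{\vu,\bv\in\mathcal N}|\vu^*\vM\bv|$ for an $\epsilon$-net $\mathcal N$ of the complex unit sphere with $|\mathcal N|\le(1+2/\epsilon)^{2n}$, it suffices to control, for each fixed pair $(\vu,\bv)$, the scalar average $\tfrac1m\sum_i\vu^*\vM_i\bv$ and then take a union bound over the $\le(1+2/\epsilon)^{4n}$ pairs.

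For the scalar step, fix unit vectors $\vu,\bv$ and set
\begin{align}
\label{eq:Zi_def}
Z_i:=(\vp^*\vA_i^*\vq)(\vu^*\vA_i\bv)\,.
\end{align}
Both factors are linear forms in the coefficient vector $\vr_i$, which is sub-Gaussian by assumption; hence each factor is a complex sub-Gaussian random variable whose sub-Gaussian norm is controlled by $\|\vp\|_2\|\vq\|_2$ and $\|\vu\|_2\|\bv\|_2$, i.e.\ by absolute constants on the unit sphere. A product of two sub-Gaussian variables is sub-exponential, so $Z_i$ is sub-exponential with a uniformly bounded sub-exponential norm. The moment identity behind \eqref{eq:exp_spec_init} gives $\mathbb{E}[Z_i]=2(\vu^*\vq)(\vp^*\bv)=\vu^*(2\vq\vp^*)\bv$, so $\vu^*\vM_i\bv=Z_i-\mathbb{E}[Z_i]$ is centered sub-exponential. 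Splitting into real and imaginary parts and applying a Bernstein-type inequality for sums of i.i.d.\ centered sub-exponential variables \cite{vershynin_2012} yields
\begin{align}
\label{eq:scalar_bernstein}
\mathrm{Pr}\!\left(\Big|\tfrac1m\textstyle\sum_i\vu^*\vM_i\bv\Big|>t\right)\le 4\exp\!\big(-c\,m\min(t^2/K^2,\,t/K)\big)\,,
\end{align}
where $K$ bounds the sub-exponential norm of $Z_i$ and $c>0$ is absolute.

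It then remains to combine \eqref{eq:scalar_bernstein} with the net. Taking the union bound over the $\le(1+2/\epsilon)^{4n}$ pairs in $\mathcal N\times\mathcal N$ multiplies the right-hand side of \eqref{eq:scalar_bernstein} by $\exp\!\big(4n\log(1+2/\epsilon)\big)$, and choosing $\epsilon,t$ so that $(1-2\epsilon)^{-1}t<\nu$ (e.g.\ $\epsilon=\tfrac14$, $t$ slightly below $\nu/2$) gives a bound of the form
\begin{align}
\label{eq:net_union}
\mathrm{Pr}\big(\|\tfrac1m\textstyle\sum_i\vM_i\|>\nu\big)\le 20\exp\!\big(4n\log 9-c\,m\min(\nu^2/(4K^2),\,\nu/(2K))\big)\,,
\end{align}
where the prefactor $20$ collects the constant from \eqref{eq:scalar_bernstein} together with the decomposition of the complex product $Z_i$ into its real sub-exponential pieces. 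Since the net exponent scales like $n$ while the Bernstein exponent scales like $m$, imposing $m>Cn$ with $C:=C(\nu)$ large enough makes the latter dominate, and the bound collapses to $20\exp\!\big(-m\,C_1(C,\nu)\big)$ with $C_1(C,\nu)>0$, as claimed.

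The main obstacle I anticipate is the uniform control of the sub-exponential norm $K$ of $Z_i$ in \eqref{eq:Zi_def}: one must verify that rotation invariance and the sub-Gaussian marginal assumption on $\vr_i$ suffice to bound the sub-exponential norm of the product by a constant independent of the particular unit vectors $\vp,\vq,\vu,\bv$, and to pin down $\mathbb{E}[Z_i]$ through the pairwise-uncorrelated, unit-variance second-moment structure established in \eqref{eq:ri_uncorr}. Once $K$ is dimension-free, the tension between the $O(n)$ net exponent and the $O(m)$ concentration exponent is exactly what dictates the sampling requirement $m>Cn$; the remaining steps are routine applications of the net estimate and Bernstein's inequality.
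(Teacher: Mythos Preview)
Your proposal is correct and uses the same net-plus-Bernstein template as the paper, but organizes the argument along a genuinely different decomposition. The paper first exploits rotation invariance to put $(\vp,\vq)$ in canonical form $\vp=\ve_1$, $\vq=r_1e^{\vj\phi_1}\ve_1+r_2e^{\vj\phi_2}\ve_2$, and then splits $\frac{1}{m}\sum_i b_i\vA_i$ (with $b_i=\vp^*\vA_i^*\vq$) into the two scalar entries at positions $(1,1),(2,1)$---which carry the entire mean $2\vq\vp^*$---and a zero-mean remainder $\frac{1}{m}\sum_i b_i\widetilde{\vA}_i$. It bounds the four scalar deviations by Bernstein directly and reserves the $\epsilon$-net over $(\vu,\bv)$ for the remainder's spectral norm; the five pieces (four scalars at $4$ each, plus $4$ from the net term) are what produce the prefactor $20$. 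You instead keep $\vp,\vq$ generic, center the full matrix at once, and run the net on $\|\frac{1}{m}\sum_i\vM_i\|$ in one shot. Your route is shorter and avoids the case split; the paper's route makes the mean structure and the constants explicit and leans more visibly on rotation invariance.

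Regarding the obstacle you anticipate, rotation invariance dispatches it immediately: for any unit $\vg\in\mathbb{R}^{2n^2}$ one has $\vg^T\vr_i\overset{d}{=}r_{i,1}$, so every real or imaginary part of $\vp^*\vA_i^*\vq$ and $\vu^*\vA_i\bv$ has sub-Gaussian norm exactly $\|r_{i,1}\|_{\psi_2}$, independent of $n$ and of the particular unit vectors. Hence the sub-exponential norm $K$ in \eqref{eq:scalar_bernstein} is bounded by an absolute multiple of $\|r_{i,1}\|_{\psi_2}^2$, which is the dimension-free constant you need.
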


Note that the statement of Lemma \ref{lemma:spectral_norm_concen} is slightly more general than what we need right now, since it allows $\vp \neq \vq$ (here we set both to $\frac{\vx}{\|\vx\|_2}$). This will be useful in the later sections. As derived in Appendix \ref{proof:lemma:spectral_norm_concen}, the concentration proof hinges on the rotation invariance of the measurement matrices $\vA_i$ \cite{Bryc1995}: If we define $\vB = \vR \vA$ with $\vR\in\mathbb{C}^{n\times n}$ being a complex unitary matrix, the real and imaginary coefficients of the entries in $\vB$ have the same joint distribution as those of $\vA$.

Lemma \ref{lemma:spectral_norm_concen} implies that $\vS$ is close to $2\vx\vx^*$ already for ``reasonable'' finite values of $m$ (see Fig. \ref{fig:concen_spec}), and the likelihood that this is not the case decays exponentially with $m$. Let $\{\vu_0,\bv_0\}$ be the ``leading'' left and right singular vectors of $\vS$, both $\vu_0$ and $\bv_0$ are highly correlated with $\frac{\vx}{\|\vx\|_2}$ with high probability, which is made precise in the proof of the below Lemma \ref{lemma:initialization}. We can then use either $\vu_0$ or $\bv_0$ to construct the spectral initializer $\vz^{(0)}$. Here we shall pick $\bv_0$ in the following discussion.
\begin{enumerate}
\item When the norm of $\vx$ is known and fixed, the spectral initializer is
\begin{align}
    \vz^{(0)}=\|\vx\|_2\cdot\bv_0\,.
\end{align}

\item When the norm of the signal is unknown, we can estimate it from the quadratic measurements. Using \eqref{eq:exp_spec_init}, we compute the following:
\begin{align}
\label{eq:unknown_norm}
\mathbb{E}\left[\frac{1}{2m}\sum_{i=1}^m\overline{y}_i y_i\right]=\mathbb{E}\left[\frac{1}{2}\vx^*\vS\vx\right]= \|\vx\|_2^4\ .
\end{align}
When $m$ is sufficiently large, we prove that $\frac{1}{2m} \sum_{i=1}^m \overline{y}_i y_i$ is close to its expectation $\|\vx\|_2^4$ with high probability (see the proof of Lemma \ref{lemma:initialization} in Appendix \ref{proof:lemma:initialization}). Based on this result, we can scale one of the leading singular vectors $\bv_0$ of $\vS$ to get our spectral initializer,
\begin{align}
\vz^{(0)} = \left(\frac{1}{2m} \sum_{i=1}^m \overline{y}_i y_i\right)^\frac{1}{4}\cdot\bv_0\,.
\end{align}
\end{enumerate}

Since $\bv_0$ is also the leading eigenvector of $\vS^*\vS$ we can use the power iteration to compute it and avoid a full singular value decomposition (SVD) of $\vS$. We initialize it with some random unit-norm vector, $\bv_0^{(0)}$, and compute the following power iteration until convergence
\begin{align}
    \label{eq:power_method_init}
    \bv_0^{(t+1)}=\frac{\vS^*\vS\bv_0^{(t)}}{\|\vS^*\vS\bv_0^{(t)}\|_2}\,,
\end{align}
whose computational complexity is $\mathcal{O}(n^2)$, as opposed to the $\mathcal{O}(n^3)$ complexity of a full SVD.

One interpretation of spectral initialization is that we are computing an approximate least-squares estimate of the matrix $\vx\vx^*$ in the subspace spanned by the measurement matrices. With this guiding principle we can adapt the spectral initialization to other types of measurements. For example, in our work on the uDGP \cite{Huang2018uDGP} we constructed an orthonormal basis for the matrix subspace spanning the measurement matrices. The measurements can be interpreted as projections of $\vx\vx^*$ on to this basis. The least-squares estimate of $\vx \vx^*$ and the corresponding spectral initialization can then be easily obtained.

The following lemma states that the distance between the spectral initializer $\vz^{(0)}$ and a global optimizer $\vx$ is small with high probability when $m$ is sufficiently large. 
\begin{lemma}
\label{lemma:initialization}
Under the rotation-invariant sub-Gaussian measurement model given by \eqref{eq:quad_eqs}, when the number of complex quadratic measurements satisfies $m>Cn$ for some sufficiently large constant $C$, for every $\delta\in(0,24)$, there exists a global optimizer $\vx$ of the loss in \eqref{eq:first_obj} such that the distance between the spectral initializer $\vz^{(0)}$ and $\vx$ obeys
\begin{align}
\label{eq:spec_init_ub_general}
\mathrm{dist}^2\left(\vz^{(0)}, \vx\right) \leq \frac{51}{24}\delta\|\vx\|_2^2\,,
\end{align}
with probability at least $1-20\exp\big(-m\cdot C_1(C,\delta)\big)$, where $C_1(C,\delta)>0$ is some constant depending on $C$ and $\delta$.
\end{lemma}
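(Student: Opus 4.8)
The plan is to feed the spectral-norm bound of Lemma~\ref{lemma:spectral_norm_concen} into an elementary singular-vector perturbation argument, to concentrate the norm estimate separately, and then to assemble the two error sources inside the definition of $\vz^{(0)}$. Throughout let $\widehat{\vx}=\vx/\|\vx\|_2$ and abbreviate the failure probabilities as $\exp(-mC_1)$.

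\textbf{Step 1 (spectral closeness).} I would first instantiate Lemma~\ref{lemma:spectral_norm_concen} with $\vp=\vq=\widehat{\vx}$. Since $\vp^*\vA_i^*\vq=\overline{y}_i/\|\vx\|_2^2$ and $2\vq\vp^*=2\vx\vx^*/\|\vx\|_2^2$, the inequality \eqref{eq:spectral_norm_concen} becomes $\|\vS-2\vx\vx^*\|<\nu\|\vx\|_2^2$ with probability at least $1-20\exp(-mC_1)$, where the parameter $\nu$ is to be fixed later as a function of $\delta$.

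\textbf{Step 2 (correlation of $\bv_0$ with $\vx$).} Because $2\vx\vx^*=2\|\vx\|_2^2\,\widehat{\vx}\widehat{\vx}^*$ is rank one with top singular value $2\|\vx\|_2^2$, the Lipschitz (Weyl) property of singular values gives $\sigma_1(\vS)\ge(2-\nu)\|\vx\|_2^2$. On the other hand, writing $\vS\bv_0=2\|\vx\|_2^2(\widehat{\vx}^*\bv_0)\widehat{\vx}+(\vS-2\vx\vx^*)\bv_0$ and using $\|\vS\bv_0\|_2=\sigma_1(\vS)$, the triangle inequality gives $\sigma_1(\vS)\le 2\|\vx\|_2^2|\widehat{\vx}^*\bv_0|+\nu\|\vx\|_2^2$. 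Combining the two estimates yields $|\widehat{\vx}^*\bv_0|\ge 1-\nu$, i.e. $1-|\widehat{\vx}^*\bv_0|\le\nu$; this is the clean translation of spectral closeness into singular-vector closeness, and it is valid precisely when $\nu<1$.

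\textbf{Step 3 (norm estimate).} The key observation is that $y_i=\vx^*\vA_i\vx$ is a \emph{fixed linear} combination of the sub-Gaussian coefficients of $\vA_i$, so $y_i$ is itself sub-Gaussian and $|y_i|^2$ is sub-exponential with $\mathbb{E}|y_i|^2=2\|\vx\|_2^4$ by \eqref{eq:unknown_norm}. A Bernstein inequality for i.i.d.\ sub-exponential variables then places $\frac{1}{2m}\sum_i\overline{y}_iy_i$ in $[(1-\epsilon)\|\vx\|_2^4,(1+\epsilon)\|\vx\|_2^4]$ with probability $1-2\exp(-mC_1')$, so that $c:=\big(\frac{1}{2m}\sum_i\overline{y}_iy_i\big)^{1/4}$ obeys $|c-\|\vx\|_2|\le C_2\,\epsilon\|\vx\|_2$ after passing through the fourth root (here $\epsilon$ is again chosen proportional to $\delta$).

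\textbf{Step 4 (assemble) and the main obstacle.} Selecting the optimal global phase as in \eqref{eq:compute_min_dist} gives
$\mathrm{dist}^2(\vz^{(0)},\vx)=c^2+\|\vx\|_2^2-2c\|\vx\|_2|\bv_0^*\widehat{\vx}|=(c-\|\vx\|_2)^2+2c\|\vx\|_2\big(1-|\bv_0^*\widehat{\vx}|\big)$,
into which I substitute the bounds from Steps~2 and~3. The main obstacle is precisely this final bookkeeping: one must choose $\nu$ and $\epsilon$ as explicit multiples of $\delta$, propagate the norm error through the fourth root, and keep the perturbation parameters below unity, which is what pins down both the constant $\tfrac{51}{24}$ and the admissible range $\delta\in(0,24)$ (the constraint $\nu<1$ of Step~2 being the natural source of the upper endpoint). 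A union bound over the events of Steps~1 and~3, with the extra multiplicative factor absorbed into $C_1(C,\delta)$, then produces the stated failure probability $20\exp\big(-m\,C_1(C,\delta)\big)$; the same argument applies verbatim to the known-norm initializer by setting $c=\|\vx\|_2$.
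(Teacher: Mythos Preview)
Your proposal is correct and follows essentially the same arc as the paper: invoke Lemma~\ref{lemma:spectral_norm_concen} with $\vp=\vq=\widehat{\vx}$, turn spectral closeness into $|\widehat{\vx}^*\bv_0|\ge 1-\nu$ by an elementary perturbation, concentrate the norm estimate, and assemble. Two implementation differences are worth flagging. First, the paper does not need your union bound in Step~4: by rotation invariance it sets $\vx=\|\vx\|_2\ve_1$, so that $\tfrac{1}{2m}\sum_i\overline{y}_iy_i=\tfrac{\|\vx\|_2^4}{2m}\sum_i|A_{i,11}|^2$, and the concentration of this sum is literally one of the sub-events already assembled inside the proof of Lemma~\ref{lemma:spectral_norm_concen} (namely \eqref{eq:lemma_2_1st_term_ub} with $r_1=1$, $\phi_1=0$); hence the norm event is contained in the Lemma~\ref{lemma:spectral_norm_concen} event and the probability stays exactly $1-20\exp(-mC_1)$. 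Second, your attribution of the endpoint $\delta<24$ to the constraint $\nu<1$ of Step~2 is off: in the paper the recycled bound carries a built-in factor $\nu/12$, giving $R\in[(1-\delta/24)\|\vx\|_2^4,(1+\delta/24)\|\vx\|_2^4]$, and the elementary inequality $\sqrt[4]{1-\delta/24}\ge 1-\delta/24$ used in the final estimate is what requires $\delta/24<1$; the constant $\tfrac{51}{24}$ then falls out of a one-line expansion. Your remaining bookkeeping is indeed routine.
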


\begin{figure}[tpb]
    \centering
    \includegraphics[width=\columnwidth]{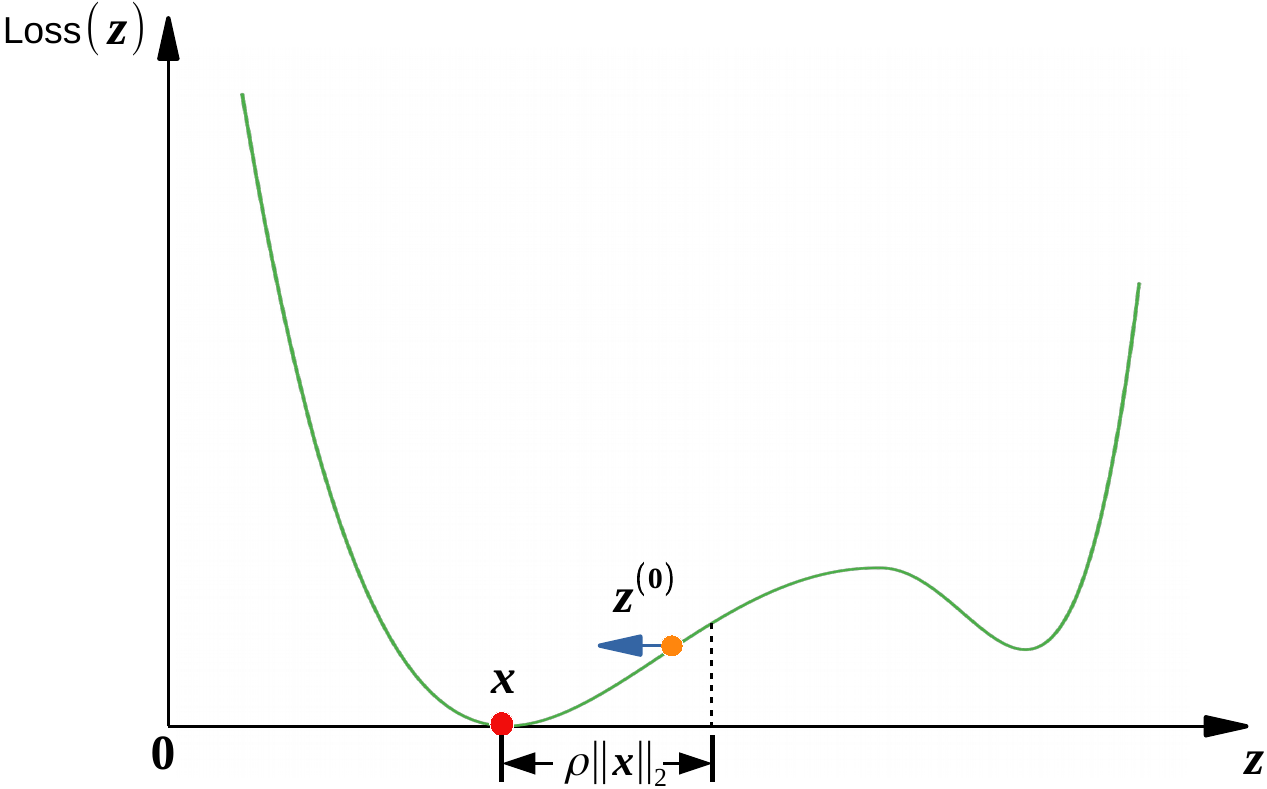}
    \caption{The loss function, $\mathrm{Loss}(\vz)$, gives rise to the basin of attraction around a global optimizer $\vx$, where Wirtinger flow updates can be used to recover $\vx$ with high probability when there are sufficient measurements.}
    \label{fig:convergence_neighborhood}
\end{figure}

As a consequence of Lemma \ref{lemma:initialization}, no matter how small a $\delta$ we choose, we can guarantee that $\vz^{(0)}$ is $\mathcal{O}(\delta)$-close to a global optimizer $\vx$ with high probability by increasing the number of measurements. As illustrated in Fig. \ref{fig:convergence_neighborhood}, suppose $\rho\|\vx\|_2$ is the size of the neighborhood around $\vx$ where a simple local optimization method such as WF can be used to recover $\vx$ with high probability. Such a neighborhood around $\vx$ is referred to as ``basin of attraction'' in \cite{Chen2015Trunc}. Our goal is then to balance the trade-off between making $\delta$ small enough so that $\vz^{(0)}$ falls within the basin of attraction and reducing the required number of measurements.

\section{Convergence analysis}
\label{sec:convergence_analysis}
Let $\vx_0$ denote a global optimizer, and $\mathcal{P}$ the set of all vectors that differ from $\vx_0$ by some phase shift $\phi$:
\[\mathcal{P}=\left\{\vx_0 e^{\vj\phi}:\ \phi\in(0,2\pi]\right\}.\]
In order to determine the neighborhood size $\rho\|\vx\|_2$, we study the convergence behavior of the WF iterates in the neighborhood $E(\rho)$ of $\mathcal{P}$, defined as
\begin{align*}
    E(\rho)=\left\{\vz\left. \  | \ \mathrm{dist}(\vz,\vx)\leq\rho\|\vx\|_2,\ \vx\in\mathcal{P}\right.\right\}\,,
\end{align*}
where $\mathrm{dist}(\vz,\vx)=\|\vz-\vx e^{\vj\phi_{\min}}\|_2$ is computed as in \eqref{eq:compute_min_dist}. The objective function $f(\vz)$ is said to satisfy the regularity condition $RC(\alpha,\beta,\rho)$ if the following holds for all $\vz\in E(\rho)$ \cite{Candes2015PhaseRV},
\begin{align}
\label{eq:regularity_condition}
\textnormal{Re}\left(\langle\nabla f(\vz), \vz-\vx e^{\vj\phi_{\min}}\rangle\right)\geq\frac{1}{\alpha}\textnormal{dist}^2(\vz,\vx)+\frac{1}{\beta}\|\nabla f(\vz)\|_2^2\,,
\end{align}
for the choice of constants $\alpha>0$, $\beta>0$, $\rho>0$. The regularity condition $RC(\alpha,\beta,\rho)$ can be derived straightforwardly by demanding that the WF step takes us closer to a global optimizer. More precisely, it ensures that the WF iterate \eqref{eq:grad_des_ud} with a step size $\eta\in\left(0,\frac{2}{\beta}\right]$ converges linearly to a global optimizer $\vx$ when the descent is initialized within the neighborhood $E(\rho)$ \cite[Lemma 7.10]{Candes2015PhaseRV}:
\begin{align}
\label{eq:wf_7_10}
\mathrm{dist}^2\left(\vz^{(t)},\vx\right)\leq\left(1-\frac{2\eta}{\alpha}\right)^t\mathrm{dist}^2\left(\vz^{(0)},\vx\right)\,.
\end{align}

One of the main challenges in going from the real to the complex case lies in the more complicated definition of the regularity condition and the related convergence analysis. Note that in the real case the regularity condition reads simply
\[
    \langle\nabla f(\vz), \vz-\vx\rangle\geq\frac{1}{\alpha}\textnormal{dist}^2(\vz,\vx)+\frac{1}{\beta}\|\nabla f(\vz)\|_2^2\,.
\]
Following the strategy from the Wirtinger flow paper \cite{Candes2015PhaseRV} to lower-bound the left-hand side of \eqref{eq:regularity_condition}
would result in complicated derivations involving the computation of the Hessian matrix. We show below how, thanks to our measurement model, we can greatly simplify these derivations using the central Lemma \ref{lemma:spectral_norm_concen_all}.

\subsection{Establishing the convergence criterion}

We now show that there exist choices of parameters $\alpha, \beta, \rho$ such that the objective function $f(\vz)$ introduced in \eqref{eq:first_obj} satisfies the regularity condition $RC(\alpha,\beta, \rho)$ in \eqref{eq:regularity_condition} with high probability, and choose a set of parameter values such that spectral initialization followed by WF succeeds with high probability. The existence of good parameters is shown in three steps according to \eqref{eq:regularity_condition} and \eqref{eq:wf_7_10}:
\begin{enumerate}
\item Finding a positive lower bound on $\mathrm{Re}\left(\langle\nabla f(\vz), \vz-\vx e^{\vj\phi_{\min}}\rangle\right)$;
\item Finding an upper bound on $\|\nabla f(\vz)\|_2^2$;
\item Choosing a suitable set of $(\alpha,\beta,\rho)$-values to obtain the regularity condition $RC(\alpha,\beta,\rho)$ in \eqref{eq:regularity_condition}.
\end{enumerate}

The main tool in proving these steps is a matrix concentration bound (a high-probability spectral norm bound) similar to the one in Lemma \ref{lemma:spectral_norm_concen}. However, Lemma \ref{lemma:spectral_norm_concen} is stated for a particular, fixed choice of the unit vectors $\vp,\vq$. Since we want the above bounds which imply the regularity condition to hold \underline{for all} vectors in $E(\rho)$, it will be useful to strengthen Lemma \ref{lemma:spectral_norm_concen} so that it holds simultaneously for all choices of $\vp$ and $\vq$.

\begin{lemma}
\label{lemma:spectral_norm_concen_all}
Under the rotation-invariant sub-Gaussian measurement model given by \eqref{eq:quad_eqs}, for every $\nu > 0$, when the number of measurements $m$ satisfies $m > C n$ for some sufficiently large constant $C := C(\nu)$, we have for all $\vp,\vq\in\mathbb{C}^n$ satisfying $\|\vp\|_2=1$, $\|\vq\|_2=1$ that
\begin{align}
\label{eq:spectral_norm_concen_all}
    \left\|\frac{1}{m}\sum_{i=1}^m\vp^*\vA_i^*\vq\cdot \vA_i - 2\vq\vp^* \right\|<\nu\,,
\end{align}
with probability at least $1-20\exp\big(-m\cdot C_2(C,\nu)\big)$, where $C_2(C,\nu)>0$ is some constant depending on $C$ and $\nu$.
\end{lemma}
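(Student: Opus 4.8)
The plan is to upgrade the fixed-vector estimate of Lemma~\ref{lemma:spectral_norm_concen} to a bound that is uniform over the sphere by a standard $\epsilon$-net (covering) argument, combined with a bootstrap step that controls the relevant Lipschitz constant in a dimension-free way. It is convenient to write $T(\vp,\vq) = \frac{1}{m}\sum_{i=1}^m (\vp^*\vA_i^*\vq)\,\vA_i - 2\vq\vp^*$ and $\Phi(\vp,\vq) = \frac{1}{m}\sum_{i=1}^m (\vp^*\vA_i^*\vq)\,\vA_i$, so that $T = \Phi - 2\vq\vp^*$ and the goal is to show $\sup_{\|\vp\|_2=\|\vq\|_2=1}\|T(\vp,\vq)\| < \nu$ with high probability. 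First I would fix a scale $\epsilon=\epsilon(\nu)\in(0,\tfrac12)$ to be chosen at the end and take an $\epsilon$-net $\mathcal{N}$ of the unit sphere of $\mathbb{C}^n \cong \mathbb{R}^{2n}$, which can be chosen with $|\mathcal{N}|\le(1+2/\epsilon)^{2n}$.

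Applying Lemma~\ref{lemma:spectral_norm_concen} with tolerance $\nu/2$ to each of the $|\mathcal{N}|^2$ pairs $(\vp_0,\vq_0)\in\mathcal{N}\times\mathcal{N}$ and taking a union bound, the event $\mathcal{E}=\{\,\|T(\vp_0,\vq_0)\|<\nu/2 \text{ for all } (\vp_0,\vq_0)\in\mathcal{N}\times\mathcal{N}\,\}$ fails with probability at most $20(1+2/\epsilon)^{4n}\exp\!\big(-m\,C_1(C,\nu/2)\big)$. Because $m>Cn$, the net cardinality contributes only $4n\log(1+2/\epsilon)\le \frac{4}{C}\log(1+2/\epsilon)\,m$ to the exponent, which is dominated by $m\,C_1$ once $C$ is large enough; this yields the claimed probability $1-20\exp\!\big(-m\,C_2(C,\nu)\big)$ with $C_2 = C_1(C,\nu/2) - \frac{4}{C}\log(1+2/\epsilon)>0$.

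Next, on the event $\mathcal{E}$ I would pass from the net to the full sphere using the sesquilinearity of $\Phi$ (conjugate-linear in $\vp$, linear in $\vq$), which gives $\Phi(\vp,\vq)-\Phi(\vp',\vq')=\Phi(\vp-\vp',\vq)+\Phi(\vp',\vq-\vq')$ and hence the Lipschitz estimate $\|T(\vp,\vq)-T(\vp',\vq')\|\le 2\epsilon(L+2)$, where $L:=\sup_{\|\vp\|_2=\|\vq\|_2=1}\|\Phi(\vp,\vq)\|$. The main obstacle is that a direct bound on $L$ through the spectral norms $\|\vA_i\|$ would scale like $\sqrt{n}$ and ruin the argument; the remedy is a bootstrap that reuses $\mathcal{E}$ itself. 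For any unit $\vp,\vq$ with nearest net points $\vp_0,\vq_0$, the event $\mathcal{E}$ gives $\|\Phi(\vp_0,\vq_0)\|<\nu/2+2$, whence $\|\Phi(\vp,\vq)\|\le(\nu/2+2)+2\epsilon L$; taking the supremum over $\vp,\vq$ and using that $L$ is finite for each realization (a continuous supremum over a compact set) rearranges to $L\le(\nu/2+2)/(1-2\epsilon)$, a dimension-free constant.

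Finally I would combine the pieces: for arbitrary unit $\vp,\vq$ with nearest net points $\vp_0,\vq_0$, on $\mathcal{E}$ we have $\|T(\vp,\vq)\|\le\|T(\vp_0,\vq_0)\|+2\epsilon(L+2)<\nu/2+2\epsilon(L+2)$, and choosing $\epsilon=\epsilon(\nu)$ small enough that $2\epsilon(L+2)<\nu/2$ (possible since $L$ is bounded by a constant depending only on $\nu$ and $\epsilon$) gives $\|T(\vp,\vq)\|<\nu$ uniformly. The crux is exactly the interplay between the two size requirements on $C$: it must be large enough to beat the $\exp\!\big(4n\log(1+2/\epsilon)\big)$ net factor in the union bound, while the same net-concentration event is what furnishes the dimension-free Lipschitz constant $L$ through the bootstrap rather than the naive $\sqrt{n}$-type estimate. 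Routine bookkeeping of how $C_2$ and $\epsilon$ depend on $\nu$ then completes the argument.
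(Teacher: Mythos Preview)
Your proposal is correct and follows essentially the same $\epsilon$-net plus self-bounding strategy as the paper's proof. The paper streamlines the bootstrap slightly by applying it directly to $T(\vp,\vq)$ (which is itself sesquilinear, so $\|T(\vp_1,\vq_1)-T(\vp_2,\vq_2)\|\le 2\epsilon\sup_{\vp,\vq}\|T(\vp,\vq)\|$ and hence $\sup\|T\|\le(1-2\epsilon)^{-1}\max_{\mathrm{net}}\|T\|$) rather than to $\Phi$; this lets them fix $\epsilon=\tfrac14$ independently of $\nu$ and dispense with the auxiliary Lipschitz constant $L$, but the mechanism is identical to yours.
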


In the standard phase retrieval measurement model \cite{Candes2015PhaseRV}, rows and columns of the measurement matrices are correlated. This is in particular the case for the $(r,c)$-th entry $\overline{a_{i,r}}a_{i,c}$ and the $(c,r)$-th entry $\overline{a_{i,c}}a_{i,r}$. We have
\begin{align}
    \mathbb{E}\left[\overline{a_{i,r}}a_{i,c}\cdot\overline{a_{i,c}}a_{i,r}\right]=\mathbb{E}\left[|a_{i,r}|^2|a_{i,c}|^2\right]>0\,. 
\end{align}
As a consequence, the distribution of $\overline{a_{i,r}}a_{i,c}\cdot\overline{a_{i,c}}a_{i,r}$ is not centered. This precludes a result parallel to Lemma \ref{lemma:spectral_norm_concen_all} which in our case lets us establish the regularity condition in a straightforward way.

With Lemma \ref{lemma:initialization} and Lemma \ref{lemma:spectral_norm_concen_all} in hand, we can now state our main result.
\begin{theorem}
\label{theorem:main_theorem}
Under the rotation-invariant sub-Gaussian measurement model given by \eqref{eq:quad_eqs}, when the number of complex quadratic measurements $m > Cn$ for some sufficiently large constant $C$,

\begin{enumerate}
    \item There exists a choice of  $1>\nu > 0,\ 1>\rho>0,\ \alpha > 0$, and $\beta > 0$, such that $RC(\alpha, \beta, \rho)$ holds with probability at least $1-\kappa^\prime\cdot\exp\big(-m\cdot C_2(C,\nu)\big)$, where $C_2(C,\nu)$ is some constant depending $C$ and $\nu$, and $\kappa^\prime>0$ is an absolute constant.
    \item Under this choice of parameters, if the step size $\eta$ is chosen so that $0 <\eta \leq \frac{2}{\beta}$, the WF iterates \eqref{eq:grad_des_ud} initialized at the spectral initializer $\vz^{(0)}$ converge linearly to a global optimizer $\vx$,
    \begin{align}
    \label{eq:thm_general}
    \mathrm{dist}^2\left(\vz^{(t)}, \vx\right) \leq \left(1-\frac{2\eta}{\alpha} \right)^t\cdot\rho^2\|\vx\|_2^2 \,,
    \end{align}
    with probability at least $1-\kappa\cdot\exp\big(-m\cdot C_3(C,\nu,\rho)\big)$, where $\kappa>0$ is some absolute constant and $C_3(C,\nu,\rho)$ depends on $C,\nu$ and $\rho$, but not on $m$..
\end{enumerate}
\end{theorem}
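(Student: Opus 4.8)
The plan is to follow the three-step program stated above, using Lemma~\ref{lemma:spectral_norm_concen_all} as the single workhorse that replaces every empirical average of the form $\frac{1}{m}\sum_i(\vp^*\vA_i^*\vq)\vA_i$ by its expectation $2\vq\vp^*$ up to spectral error $\nu$. First I would fix $\vz\in E(\rho)$, write $\vh=\vz-\vx e^{\vj\phi_{\min}}$ so that $\mathrm{dist}^2(\vz,\vx)=\|\vh\|_2^2$ and $\|\vh\|_2\le\rho\|\vx\|_2$, and recall that the minimizing phase makes $\vh^*\vx e^{\vj\phi_{\min}}=|\vz^*\vx|-\|\vx\|_2^2$ real. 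Writing $\delta_i=\vz^*\vA_i\vz-\vx^*\vA_i\vx$, I would split $\delta_i=\delta_i^{(1)}+\delta_i^{(2)}$ into its part linear in $\vh$, namely $\delta_i^{(1)}=(\vx e^{\vj\phi_{\min}})^*\vA_i\vh+\vh^*\vA_i(\vx e^{\vj\phi_{\min}})$, and its quadratic part $\delta_i^{(2)}=\vh^*\vA_i\vh$, and then expand $\mathrm{Re}(\langle\nabla f(\vz),\vh\rangle)$ from \eqref{eq:gradient_1} by degree in $\vh$.

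For the lower bound (Step~1), the crucial observation is that the leading, quadratic-in-$\vh$ contribution collapses to the manifestly nonnegative quantity $\frac{1}{m}\sum_i|\delta_i^{(1)}|^2$. Applying Lemma~\ref{lemma:spectral_norm_concen_all} to the matrix $\frac{1}{m}\sum_i\overline{\delta_i^{(1)}}\vA_i$, which concentrates around the sum of rank-one targets $2(\vx e^{\vj\phi_{\min}})\vh^*+2\vh(\vx e^{\vj\phi_{\min}})^*$, and sandwiching by $\vx e^{\vj\phi_{\min}}$ and $\vh$, this term is driven to $4\|\vx\|_2^2\|\vh\|_2^2+4(\vh^*\vx e^{\vj\phi_{\min}})^2$; since the cross term is a real square it is nonnegative, yielding $\frac{1}{m}\sum_i|\delta_i^{(1)}|^2\ge(4-O(\nu))\|\vx\|_2^2\|\vh\|_2^2$. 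The remaining cubic- and quartic-in-$\vh$ terms each carry at least one factor $\delta_i^{(2)}=O(\|\vh\|_2^2)$; bounding their empirical averages again through Lemma~\ref{lemma:spectral_norm_concen_all} and using $\|\vh\|_2\le\rho\|\vx\|_2$ shows they are $O(\rho)\|\vx\|_2^2\|\vh\|_2^2$, so for $\nu,\rho$ small enough I obtain $\mathrm{Re}(\langle\nabla f(\vz),\vh\rangle)\ge c_1\|\vx\|_2^2\,\mathrm{dist}^2(\vz,\vx)$ for an absolute $c_1>0$.

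Step~2 is analogous but easier: expanding $\nabla f(\vz)$ and replacing each empirical average by its expectation via Lemma~\ref{lemma:spectral_norm_concen_all}, the leading part is $2(\vh^*\vx e^{\vj\phi_{\min}})\vx e^{\vj\phi_{\min}}+2\|\vx\|_2^2\vh$ together with its conjugate-type partner, whose norm is $O(\|\vx\|_2^2\|\vh\|_2)$, while the higher-order terms are again $O(\rho)$ smaller, giving $\|\nabla f(\vz)\|_2^2\le c_2\|\vx\|_2^4\,\mathrm{dist}^2(\vz,\vx)$. For Step~3 I would set, e.g., $\frac{1}{\alpha}=\frac{c_1}{2}\|\vx\|_2^2$ and $\frac{1}{\beta}=\frac{c_1}{2c_2\|\vx\|_2^2}$, so that $\frac{1}{\alpha}\mathrm{dist}^2(\vz,\vx)+\frac{1}{\beta}\|\nabla f(\vz)\|_2^2\le c_1\|\vx\|_2^2\,\mathrm{dist}^2(\vz,\vx)\le\mathrm{Re}(\langle\nabla f(\vz),\vh\rangle)$, which is exactly $RC(\alpha,\beta,\rho)$ in \eqref{eq:regularity_condition}; since all estimates are driven by the single uniform event of Lemma~\ref{lemma:spectral_norm_concen_all}, this proves part~(1) with the stated probability $1-\kappa'\exp(-m C_2(C,\nu))$. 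For part~(2) I would choose $\delta\in(0,24)$ small enough that $\frac{51}{24}\delta\le\rho^2$, so Lemma~\ref{lemma:initialization} places the spectral initializer inside the basin, $\mathrm{dist}^2(\vz^{(0)},\vx)\le\rho^2\|\vx\|_2^2$, i.e.\ $\vz^{(0)}\in E(\rho)$; feeding $RC(\alpha,\beta,\rho)$ and any step size $0<\eta\le\frac{2}{\beta}$ into the contraction \eqref{eq:wf_7_10} gives linear convergence, the iterates stay in $E(\rho)$ because the distance only decreases, and substituting the initialization bound yields \eqref{eq:thm_general}. A union bound over the initialization event and the regularity event produces the final probability $1-\kappa\exp(-m C_3(C,\nu,\rho))$.

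The main obstacle I anticipate is Step~1: showing that the leading quadratic form is genuinely bounded below by a positive multiple of $\|\vx\|_2^2\|\vh\|_2^2$ \emph{uniformly} over all $\vz\in E(\rho)$, and that the cubic and quartic remainders cannot overwhelm it. This is precisely why the fixed-vector Lemma~\ref{lemma:spectral_norm_concen} must be upgraded to the uniform Lemma~\ref{lemma:spectral_norm_concen_all}: the error direction $\vh/\|\vh\|_2$ ranges over an entire neighborhood, so the concentration has to hold simultaneously for all unit $\vp,\vq$. Getting the absolute constants $c_1,c_2$ and the admissible $(\nu,\rho)$-window to be mutually consistent, while keeping the real-part bookkeeping correct through the complex Wirtinger gradient, is the part that will demand the most care.
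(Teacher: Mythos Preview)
Your proposal is correct and follows essentially the same approach as the paper's proof: the identical three-step program for establishing $RC(\alpha,\beta,\rho)$ driven entirely by the uniform spectral-norm concentration of Lemma~\ref{lemma:spectral_norm_concen_all}, followed by Lemma~\ref{lemma:initialization} plus the contraction \eqref{eq:wf_7_10} for Part~2. The only difference is organizational: the paper expands $\mathrm{Re}(\langle\vh,\nabla f(\vz)\rangle)$ directly into six explicit terms (your $\delta_i^{(1)},\delta_i^{(2)}$ products written out) and bounds each one separately via Lemma~\ref{lemma:spectral_norm_concen_all}, arriving at the closed-form $c_1(\nu,\rho)=4\big(1-\rho-\tfrac{\nu}{2}(2+3\rho+\rho^2)\big)$ and $c_2(\nu,\rho)=4(2+\nu)^2(\rho^2+3\rho+2)^2$, whereas you group by degree in $\vh$ and exploit that the quadratic part is exactly $\frac{1}{m}\sum_i|\delta_i^{(1)}|^2\ge0$ before applying the lemma. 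Your bookkeeping is arguably cleaner, but the substance, the key lemma, and the probability accounting are the same; just remember to verify the side constraint $\alpha\beta\ge4$ (needed for \eqref{eq:wf_7_10}) when you fix $\alpha,\beta$---the paper handles both the $c_2\ge c_1^2$ and $c_2<c_1^2$ cases explicitly.
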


\begin{proof}
The main task in proving the theorem is to prove Part 1 (the regularity condition). Once we establish that there exists a choice of parameters such that the regularity condition holds in the neighborhood of a global minimizer, Part 2 (linear convergence to a global minimizer) follows simply by noting that $RC(\alpha, \beta, \rho)$ implies
\[
    \mathrm{dist}^2(\vz^{(t+1)}, \vx) \leq \left( 1 - \frac{2\eta}{\alpha} \right) \mathrm{dist}^2(\vz^{(t)}, \vx),
\]
whenever $\vz^{(t)} \in E(\rho)$.

\noindent{\bfseries Part 1}: Establishing the regularity condition. We work as follows:

\paragraph{Finding a positive lower bound on $\mathrm{Re}\left(\langle\nabla f(\vz), \vz-\vx e^{\vj\phi_{\min}}\rangle\right)$ in the neighborhood of $\vx$} We let $\vh=\vz e^{-\vj\phi_{\min}}-\vx$ so that
\begin{align}
\label{eq:pfthm1-Rrhnablaf}
\mathrm{Re}\left(\langle\vh,\nabla f(\vz)\rangle\right) = \mathrm{Re}(\langle\nabla f(\vz), \vz-\vx e^{\vj\phi_{\min}}\rangle),
\end{align}
and equivalently look for a lower bound on $\mathrm{Re}\left(\langle\vh,\nabla f(\vz)\rangle\right)$ for all $\vh,\vx\in\mathbb{C}^n$ satisfying $\frac{\|\vh\|_2}{\|\vx\|_2}\leq\rho$. We proceed by showing that $\mathbb{E}\left[\mathrm{Re}\left(\langle\vh,\nabla f(\vz)\rangle\right)\right]>0$, and that $\mathrm{Re}\left(\langle\vh,\nabla f(\vz)\rangle\right)$ is close to $\mathbb{E}\left[\mathrm{Re}\left(\langle\vh,\nabla f(\vz)\rangle\right)\right]$ with high probability, so that a strictly positive lower bound can be established with high probability when $m$ is sufficiently large. The expression \eqref{eq:pfthm1-Rrhnablaf} can be expanded as
\begin{align}
\label{eq:first_obj_inner_prod}
\begin{split}
&\textnormal{Re}\left(\langle\vh,\nabla f(\vz)\rangle\right)\\
&=\frac{1}{m}\sum_{i=1}^m\left[2 \cdot \textnormal{Re}\left(\vh^*\vA_i^*\vh\cdot\vh^*\vA_i\vh\right)+3\cdot\textnormal{Re}\left(\vh^*\vA_i^*\vh\cdot\vh^*\vA_i\vx\right) \right. \\
&\quad\quad\quad+3\cdot\textnormal{Re}\left(\vh^*\vA_i^*\vx\cdot\vh^*\vA_i\vh\right)+2\cdot\textnormal{Re}\left(\vh^*\vA_i^*\vx\cdot\vh^*\vA_i\vx\right) \\
&\quad\quad\quad\left. +\ \textnormal{Re}\left(\vh^*\vA_i^*\vx \cdot \vx^*\vA_i\vh\right) + \ \textnormal{Re}\left(\vx^*\vA_i^*\vh\cdot\vh^*\vA_i\vx\right)\right]\,.
\end{split}
\end{align}
We rely on the spectral norm bound in Lemma \ref{lemma:spectral_norm_concen_all} to lower-bound \eqref{eq:first_obj_inner_prod}, by bounding each of the six terms in turn. Since the exact same logic applies to all terms, we give details only for the second one, $\frac{1}{m}\sum_{i=1}^m\textnormal{Re}(\vh^*\vA_i^*\vx\cdot\vh^*\vA_i\vx)$. Let $\vh=\|\vh\|_2\cdot\widehat{\vh}$, and $\vx=\|\vx\|_2\cdot\widehat{\vx}$, where $\|\widehat{\vh}\|_2=1$ and $\|\widehat{\vx}\|_2=1$. By Lemma \ref{lemma:spectral_norm_concen_all} we have 
\begin{align}
\label{eq:expand}
\begin{split}
    &\left|\frac{1}{m}\sum_{i=1}^m\textnormal{Re}(\vh^*\vA_i^*\vx\cdot\vh^*\vA_i\vx)-2\cdot\textnormal{Re}(\vh^*\vx\vh^*\vx)\right|\\
    &\leq \left|\frac{1}{m}\sum_{i=1}^m\vh^*\vA_i^*\vx\cdot\vh^*\vA_i\vx-2\vh^*\vx\vh^*\vx\right|\\
    &\leq\|\vh\|_2^2\|\vx\|_2^2\cdot\left\|\frac{1}{m}\sum_{i=1}^m\widehat{\vh}^*\vA_i^*\widehat{\vx}\cdot\vA_i-2\widehat{\vx}\widehat{\vh}^*\right\|\\
    &\leq\nu\|\vh\|_2^2\|\vx\|_2^2\,,
\end{split}
\end{align}
\underline{for all} $\vh$ and $\vx$ with probability at least $1-20\exp\big(-m\cdot C_2(C,\nu)\big)$. From \eqref{eq:compute_min_dist} we see that $\vh^*\vx=\vz^*\vx e^{\vj\phi_{\min}}-\|\vx\|_2^2=|\vz^*\vx|-\|\vx\|_2^2$ is a real number. We then have the following
\begin{align}
\label{eq:expand_left}
    \frac{1}{m}\sum_{i=1}^m\textnormal{Re}(\vh^*\vA_i^*\vx\cdot\vh^*\vA_i\vx) \geq 2(\vh^*\vx)^2-\nu\|\vh\|_2^2\|\vx\|_2^2\,,
\end{align}
also holds \underline{for all} $\vh$ and $\vx$ with probability\footnote{The probability that \eqref{eq:expand_left} holds is no less than the probability that \eqref{eq:expand} holds.} at least $1-20\exp\big(-m\cdot C_2(C,\nu)\big)$. Repeating for every term in \eqref{eq:first_obj_inner_prod}, for all $\vh,\vx\in\mathbb{C}^n$ satisfying $\frac{\|\vh\|_2}{\|\vx\|_2}\leq\rho$, we find that
\begin{equation}
\label{eq:lower_bound_re}
\begin{aligned}
&\textnormal{Re}\left(\langle\vh,\nabla f(\vz)\rangle\right)\\
&\geq 4\left[\left(\|\vh\|_2^2+\vh^*\vx\right)^2+\|\vh\|_2^2\left(\|\vx\|_2^2+\vx^*\vh\right)\right]\\
&\quad-2\|\vh\|_2^2\nu\left(\|\vh\|_2^2+3\|\vh\|_2\|\vx\|_2+2\|\vx\|_2^2\right)\\
&\geq 4\left[\|\vh\|_2^2\left(\|\vx\|_2^2+\vx^*\vh\right)\right]\\
&\quad-2\|\vh\|_2^2\nu\left(\|\vh\|_2^2+3\|\vh\|_2\|\vx\|_2+2\|\vx\|_2^2\right)\\
&\geq 4\left(1-\rho-\frac{\nu}{2}\left(2+3\rho+\rho^2\right)\right)\cdot\|\vh\|_2^2\|\vx\|_2^2\\
&=c_1(\nu,\rho)\cdot\|\vh\|_2^2\|\vx\|_2^2\,,
\end{aligned}
\end{equation}
holds with probability at least $1-\kappa_1\exp\big(-m\cdot C_2(C,\nu)\big)$ where $\kappa_1>0$ is some absolute constant.

\paragraph{Finding an upper bound on $\|\nabla f(\vz)\|_2^2$ in the neighborhood of $\vx$} We can rewrite $\|\nabla f(\vz)\|_2$ as follows,
\begin{align}
\label{eq:ub_derivative_norm}
\begin{split}
    &\|\nabla f(\vz)\|_2\\
    &=\left\|\frac{1}{m}\sum_{i=1}^m\left(\vh^*\vA_i^*\vh+\vh^*\vA_i^*\vx+\vx^*\vA_i^*\vh\right)\vA_i(\vh+\vx)\right.\\
    &\quad\left.\phantom{\frac{1}{m}} + \left(\vh^*\vA_i\vh+\vh^*\vA_i\vx+\vx^*\vA_i\vh\right)\vA_i^*(\vh+\vx)\right\|_2.
\end{split}
\end{align}
To upper-bound \eqref{eq:ub_derivative_norm} we again rely on the spectral norm bound in Lemma \ref{lemma:spectral_norm_concen_all}. Let $\vh=\|\vh\|_2\cdot\widehat{\vh}$, and $\vx=\|\vx\|_2\cdot\widehat{\vx}$, where $\|\widehat{\vh}\|_2=1$ and $\|\widehat{\vx}\|_2=1$. We bound the second term (say) in \eqref{eq:ub_derivative_norm} as follows
\begin{align}
\begin{split}
&\left\|\frac{1}{m}\sum_{i=1}^m\vh^*\vA_i^*\vx\cdot\vA_i(\vh+\vx)\right\|_2-\left\|2\vx\vh^*(\vh+\vx)\right\|_2\\
&\leq \left\|\left(\frac{1}{m}\sum_{i=1}^m\vh^*\vA_i^*\vx\cdot\vA_i-2\vx\vh^*\right)(\vh+\vx)\right\|_2\\
&\leq \left\|\frac{1}{m}\sum_{i=1}^m\widehat{\vh}^*\vA_i^*\widehat{\vx}\cdot\vA_i-2\widehat{\vx}\widehat{\vh}^*\right\|\cdot\|\vh\|_2\|\vx\|_2\cdot\|\vh+\vx\|_2\\
&\leq\nu\|\vh\|_2\|\vx\|_2\cdot\left(\|\vh\|_2+\|\vx\|_2\right)\,,
\end{split}
\end{align}
holds with probability at least $1-20\exp\big(-m\cdot C_2(C,\nu)\big)$, implying that
\begin{align}
\begin{split}
    &\left\|\frac{1}{m}\sum_{i=1}^m\vh^*\vA_i^*\vx\cdot\vA_i(\vh+\vx)\right\|_2\\
    &\leq (\nu+2)\|\vh\|_2\|\vx\|_2\cdot\left(\|\vh\|_2+\|\vx\|_2\right)\,,
\end{split}
\end{align}
holds with at least the same probability. Repeating for all terms in \eqref{eq:ub_derivative_norm}, we get
\begin{align}
\label{eq:upper_bound_norm}
\begin{split}
    \|\nabla f(\vz)\|_2^2 &\leq 4(2+\nu)^2\left(\rho^2+3\rho+2\right)^2\cdot\|\vh\|_2^2\|\vx\|_2^4\\
    &=c_2(\nu,\rho)\cdot\|\vh\|_2^2\|\vx\|_2^4\,,
\end{split}
\end{align}
with probability at least $1-\kappa_2\exp\big(-m\cdot C_2(C,\nu)\big)$ for all $\vh,\vx\in\mathbb{C}^n$ satisfying $\frac{\|\vh\|_2}{\|\vx\|_2}\leq\rho$, where $\kappa_2>0$ is some absolute constant.

\paragraph{Choosing suitable $\nu,\rho$ and $\alpha,\beta$}

We have that \eqref{eq:lower_bound_re} and \eqref{eq:upper_bound_norm} hold simultaneously with probability at least $1-\kappa'\cdot\exp\big(-m\cdot C_2(C,\nu)\big)$ where $\kappa'>0$ is some absolute constant. 

In \eqref{eq:lower_bound_re}, $c_1(\nu,\rho)>0$ is a sufficient condition to make $\mathrm{Re}(\langle\vh,\nabla f(\vz)\rangle)>0$ so that we could establish the regularity condition.
\begin{enumerate}
    \item If $\nu\geq1$, we have $c_1(\nu,\rho)\leq-\frac{5}{2}\rho-\frac{1}{2}\rho^2<0$. Hence $\nu$ needs to be less than $1$. 
    \item Given some $\nu\in(0,1)$, there exists a matching $\rho\in(0,1)$ to ensure $c_1(\nu,\rho)>0$. The chosen $\nu,\rho$ always lead to $c_2(\nu,\rho)>0$ in \eqref{eq:upper_bound_norm}.
\end{enumerate}
In other words, the radius of the convergence neighbourhood $E(\rho)$ could grow as large as $\|\vx\|_2$ when there are sufficient measurements.

It remains to show that there exist $\alpha>0$, $\beta>0$ so that 
\begin{align}
    \label{eq:regularity_condition_diff}
    \mathrm{Re}(\langle\vh,\nabla f(\vz)\rangle)\geq\frac{1}{\alpha}\|\vh\|_2+\frac{1}{\beta}\|\nabla f(\vz)\|_2^2\,,
\end{align}
holds with probability at least $1-\kappa'\cdot\exp\big(-m\cdot C_2(C,\nu)\big)$ for all $\vh,\vx\in\mathbb{C}^n$ satisfying $\frac{\|\vh\|_2}{\|\vx\|_2}\leq\rho$. Note that \eqref{eq:regularity_condition_diff} is equivalent to the regularity condition \eqref{eq:regularity_condition}.

The parameters $\alpha$ and $\beta$ can be chosen as follows.
\begin{enumerate}
\item From \eqref{eq:wf_7_10} we need $1-\frac{2\eta}{\alpha}\geq 0$. Since $\eta\in(0,\frac{2}{\beta}]$, according to \cite[Lemma 7.10]{Candes2015PhaseRV}, $\alpha$ and $\beta$ should satisfy $\frac{4}{\alpha\beta}\leq 1$. Making the change of variable $ \alpha = a / {\|\vx\|_2^2}$, $\beta = b \cdot c_2(\nu,\rho)\|\vx\|_2^2$, we then require
\begin{equation}
\label{eq:thm_cst_1}
\frac{4}{ab}\leq c_2(\nu,\rho)\,.
\end{equation}
\item From \eqref{eq:lower_bound_re} and \eqref{eq:upper_bound_norm}, the regularity condition \eqref{eq:regularity_condition} will hold if
\begin{align}
\begin{split}
c_1(\nu,\rho)\cdot\|\vh\|_2^2\|\vx\|_2^2
&\geq\frac{1}{\alpha}\|\vh\|_2^2+\frac{1}{\beta}\cdot c_2(\nu,\rho)\|\vh\|_2^2\|\vx\|_2^4\\
&=\left(\frac{1}{a}+\frac{1}{b}\right)\|\vh\|_2^2\|\vx\|_2^2\,,
\end{split}
\end{align}
or equivalently
\begin{align}
\label{eq:thm_cst_2}
\frac{1}{a}+\frac{1}{b}\leq c_1(\nu,\rho)\,.
\end{align}
\end{enumerate}
There exist many choices of $a$ and $b$ (and thus $\alpha$ and $\beta$) that simultaneously satisfy \eqref{eq:thm_cst_1} and \eqref{eq:thm_cst_2}, and consequently the regularity condition \eqref{eq:regularity_condition}. For example, to get the best convergence rate, we can choose $a$, $b$ that maximize $\frac{4}{\alpha\beta}=\frac{4}{c_2(\nu,\rho)}\cdot\frac{1}{ab}$ subject to \eqref{eq:thm_cst_1} and \eqref{eq:thm_cst_2}:
\begin{itemize}
\item If $c_2(\nu,\rho)\geq c_1(\nu,\rho)^2$, we can choose
\begin{align}
a =\frac{2}{c_1(\nu,\rho)}, \quad \quad
b =\frac{2}{c_1(\nu,\rho)}\,.
\end{align}
\item If $c_2(\nu,\rho)<c_1(\nu,\rho)^2$, we can choose $a$, $b$ that solve
\begin{align}
ab =\frac{4}{c_2(\nu,\rho)}, \quad \quad
a+b \leq\frac{4c_1(\nu,\rho)}{c_2(\nu,\rho)}\,.
\end{align}
\end{itemize}

\noindent{\bfseries Part 2}: Linear convergence to a global minimizer with spectral initialization.

After $\nu,\rho,\alpha$, and $\beta$ are chosen in Part 1, it remains to ensure that the spectral initializer $\vz^{(0)}$ falls inside the neighbourhood $E(\rho)$ with high probability. Using Lemma \ref{lemma:initialization}, we have that 
\begin{align}
    \label{eq:spec_init_reg_con}
    \frac{\mathrm{dist}^2\left(\vz^{(0)},\vx\right)}{\|\vx\|_2^2}\leq\rho^2\,,
\end{align}
holds with probability at least $1-20\exp\left(-m\cdot C_1\left(C,\frac{24}{51}\rho^2\right)\right)$. Note that choosing $\rho\in(0,1)$ naturally satisfies the constraint imposed on $\frac{24}{51}\rho^2$ in Lemma \ref{lemma:initialization}, i.e. $\frac{24}{51}\rho^2\in\left(0,24\right)$. Combining \eqref{eq:regularity_condition_diff} and \eqref{eq:spec_init_reg_con}, we have that the WF update \eqref{eq:grad_des_ud} linearly converges to a global minimizer with probability at least $1-\kappa\cdot\exp\big(-m\cdot C_3(C,\nu,\rho)\big)$, where $\kappa>0$ is some absolute constant and $C_3(C,\nu,\rho)=\min\left\{C_2(C,\nu),\  C_1(C,\frac{24}{51}\rho^2)\right\}$ depends on $\{C,\nu,\rho\}$, but not on $m$.

\end{proof}

\subsection{Choosing the \texorpdfstring{$\{\alpha,\beta,\rho\}$} --values for the regularity condition} 
\label{sec:choosing_reg_params}

Generally, for some $\nu\in(0,1)$, one begins by choosing a suitable $\rho\in(0,1)$ to obtain a positive lower bound on $\mathrm{Re}\left(\langle\nabla f(\vz), \vz-\vx e^{\vj\phi_{\min}}\rangle\right)$ as in \eqref{eq:lower_bound_re}. The values of $\alpha>0,\ \beta>0$ should be large enough to ensure the regularity condition holds for all $\vz\in E(\rho)$, and satisfy $\frac{4}{\alpha\beta}\leq 1$ so that the iterates $\vz^{(t)}$ come closer to $\vx$ with each iteration.

As derived in the proof of Theorem \ref{theorem:main_theorem}, it is clear that there are many choices for the values of $\{\alpha,\beta,\rho\}$ such that the initializer $\vz^{(0)}\in E(\rho)$ and the objective function $f(\vz)$ satisfies the regularity condition \eqref{eq:regularity_condition} with high probability. For concreteness, we shall showcase a particular choice of ``good'' parameter values.

If we choose $\nu=0.01$ and $\rho= 0.2$, it is easy to verify that
\begin{align}
\textnormal{Re}\left(\left<\vh,\nabla f(\vz)\right>\right)&>3|\vh\|_2^2\|\vx\|_2^2\\
\|\nabla f(\vz)\|_2^2&<120\|\vh\|_2^2\|\vx\|_2^4\,,
\end{align}
where $\vh=\vz e^{-\vj \phi_{\min}}-\vx$. We can then choose $\alpha=\frac{2}{3}\cdot\frac{1}{\|\vx\|_2^2}$ and $\beta=\frac{2}{3}\cdot 120\|\vx\|_2^2$ to obtain
\begin{align}
\begin{split}
\textnormal{Re}\left(\left<\vh,\nabla f(\vz)\right>\right)&> \frac{3}{2}\|\vh\|_2^2\|\vx\|_2^2 + \frac{3}{2}\|\vh\|_2^2\|\vx\|_2^2\\
&>\frac{1}{\alpha}\|\vh\|_2^2+\frac{1}{\beta}\|\nabla f(\vz)\|_2^2\,.
\end{split}
\end{align}
When $m$ is sufficiently large, the regularity condition holds for all $\vh$ satisfying $\rho=\frac{\|\vh\|_2}{\|\vx\|_2}\leq 0.2$ with high probability. We can show that the following update converges linearly to a global optimizer:
\begin{align}
\vz^{(t+1)} = \vz^{(t)}-\eta\nabla f(\vz)\,, \tag{\ref{eq:grad_des_ud} revisited}
\end{align}
where $0<\eta\leq\frac{2}{\beta}$.

\section{Experimental results}
\label{sec:experiments}
We perform numerical experiments to corroborate the theoretical results. Theorem \ref{theorem:main_theorem} states that the step size is upper-bounded by $\frac{2}{\beta}$ where $\beta$ is one of the regularity condition parameters in \eqref{eq:regularity_condition}. In Section \ref{sec:choosing_reg_params} $\beta$ is proportional to the squared norm of the signal to recover. Hence, in all experiments the step size is chosen as $\frac{0.1}{\|\vx\|_2^2}$ where the signal norm, $\|\vx\|_2^2$, is estimated using \eqref{eq:unknown_norm}. The value of $0.1$ was experimentally found to give a suitable balance between convergence speed and reliability.\footnote{Code available at \url{https://github.com/swing-research/random_quadratic_equations} under the MIT License.}

\subsection{Rotation-invariant sub-Gaussian distributions} \label{sec:rot_exp_family}
We next illustrate examples of rotation-invariant sub-Gaussian distributions which we use in our experiments. Let
\begin{align}
    \vu = \|\vs\|_2^q\cdot\vs\,,
\end{align}
where $q\in[-1,0]$, and $\vs \sim\mathcal{N}(\vzero, \vI_{d \times d})$, $\vs\neq\vzero$. If $q=-1$, the variable $\vu$ is uniformly distributed on the sphere $\|\vu\|_2=1$.
If $q\in(-1,0)$, the pdf of $\vu$ is
\begin{align}
\label{eq:rot_exp_pdf}
    p(\vu)=\frac{1}{q+1}\|\vu\|_2^{-\frac{qd}{q+1}}(2\pi)^{-\frac{d}{2}}\exp\left(-\frac{1}{2}\|\vu\|_2^{\frac{2}{q+1}}\right)\,,
\end{align}
(see Appendix \ref{app:ri_exp_fam}). If $q=0$, then $\vu=\vs$ follows the standard multivariate Gaussian distribution. In the experiments, we further scale $\vu\in\mathbb{R}^{d}$ with the scaling parameter $\gamma>0$: $\vr=\gamma\cdot\vu$ so that the coefficient vector $\vr$ satisfies $\mu^2=\text{Var}(r_i)=\mathbb{E}\left[r_i^2\right]=1$.

\begin{figure}[tpb]
    \centering
    \includegraphics[width=0.9\columnwidth]{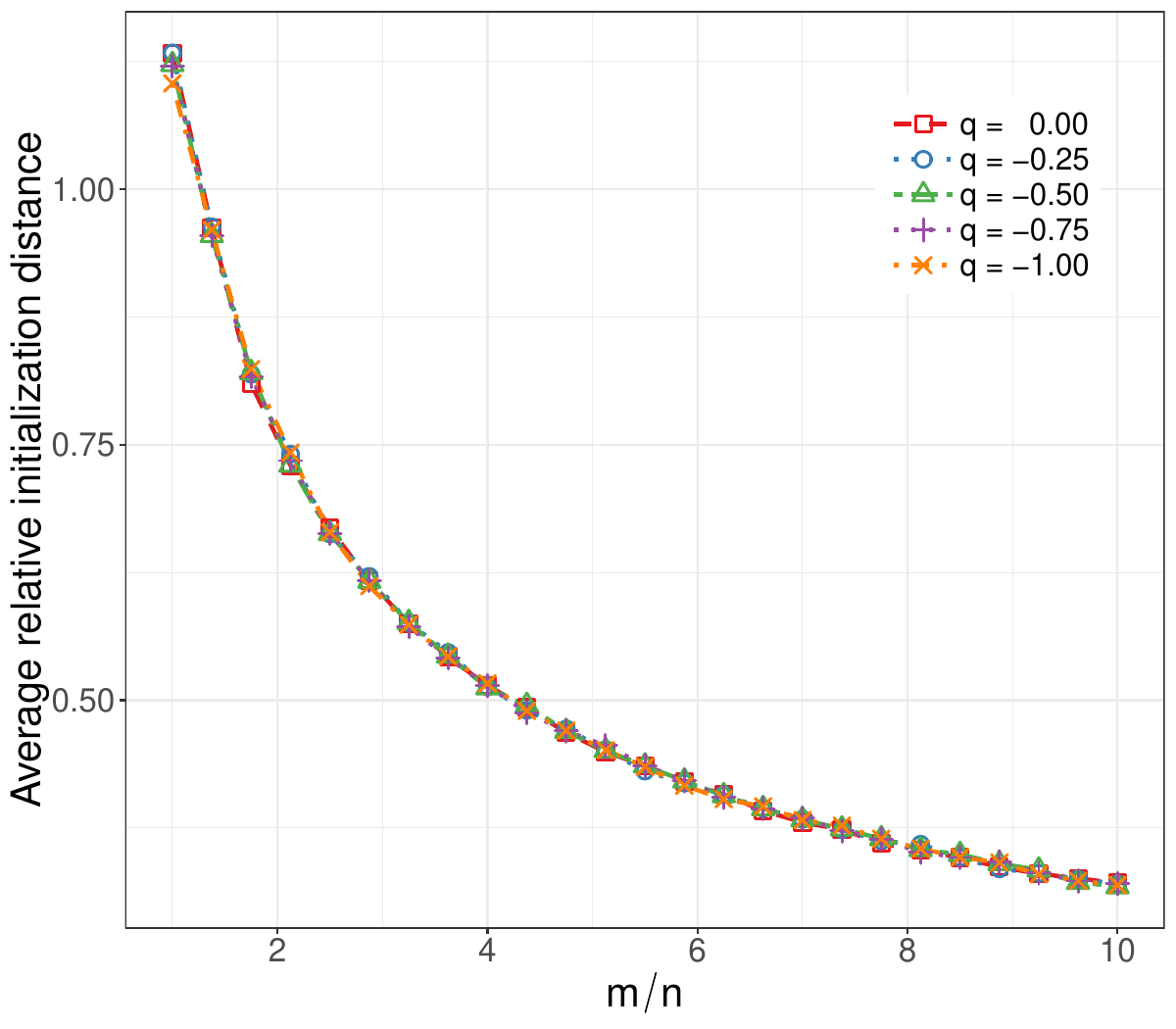}
    \caption{Closeness of spectral initialization with varying number of measurements where the complex random measurement matrices are from the rotation-invariant sub-Gaussian measurement model in Section \ref{sec:rot_exp_family}.}
    \label{fig:init_closeness_any_norm}
\end{figure}

\subsection{Closeness of spectral initializer}
\label{sec:exp_spec_init}
In this experiment we monitor how the distance between the initialization and the true solution varies with the number of measurements. We fix $n = 100$ and try different values of $m$ with $\frac{m}{n}$ uniformly sampled between $1$ and $10$. We run $100$ random trials for each $\frac{m}{n}$ value and calculate the average relative distance between the initialization and a global optimizer. In each trial we generate a random signal $\vx \in \mathbb{C}^n$ and $m$ complex random rotation-invariant sub-Gaussian matrices from the same distribution to produce $m$ complex quadratic measurements. We repeat this experiment with multiple distributions by varying the $q$ parameter in Section \ref{sec:rot_exp_family}.

Distance between complex signals is defined in \eqref{eq:compute_min_dist}. We define relative distance as $\frac{\mathrm{dist}\left(\vx, \vz^{(0)}\right)}{\|\vx\|_2}$ where $\vx$ is the original signal and $\vz^{(0)}$ is the initialization. In Fig. \ref{fig:init_closeness_any_norm} we can see that the spectral initializer comes closer to a global optimizer as $\frac{m}{n}$ increases. The behavior is not affected by varying $q$.

\subsection{Phase transition behavior}
\label{subsec:exp_phase_transition}
In this experiment we evaluate how the proposed approach transits from a failure phase to a success phase as we increase the number of measurements. We fix $n = 100$ and try different values of $m$ with $\frac{m}{n}$ sampled uniformly between $1.5$ and $5.5$. We again run 100 random trials for each $\frac{m}{n}$ value and calculate the success rate. Success is declared if the relative distance between the recovered and true signal is less than $10^{-5}$. Again, in each trial a random signal $\vx \in \mathbb{C}^n$ is reconstructed and multiple distributions from Section \ref{sec:rot_exp_family} are used to generate the measurement matrices.

The iterative WF reconstruction is terminated if the relative distance between successive iterations is less than $10^{-6}$ or if 2500 iterations are completed. Here we define relative distance between successive iterates as $\frac{\mathrm{dist}\left(\vz^{(t-1)},\ \vz^{(t)}\right)}{\|\vx\|_2}$ where $\vz^{(t-1)}$ and $\vz^{(t)}$ are two solutions recovered from successive WF iterates and $\vx$ is the original signal. As the true signal and its norm are unknown during the WF updates, \eqref{eq:unknown_norm} is used to estimate $\|\vx\|_2$. Fig. \ref{fig:phase_transition_any_norm} shows that approximately $4n$ measurements are needed to successfully recover the signal for all tested values of $q$.\footnote{We note that changing the numerical tolerance in the algorithm stopping criterion can shift the curve.}

\begin{figure}[tpb]
    \centering
    \includegraphics[width=0.9\columnwidth]{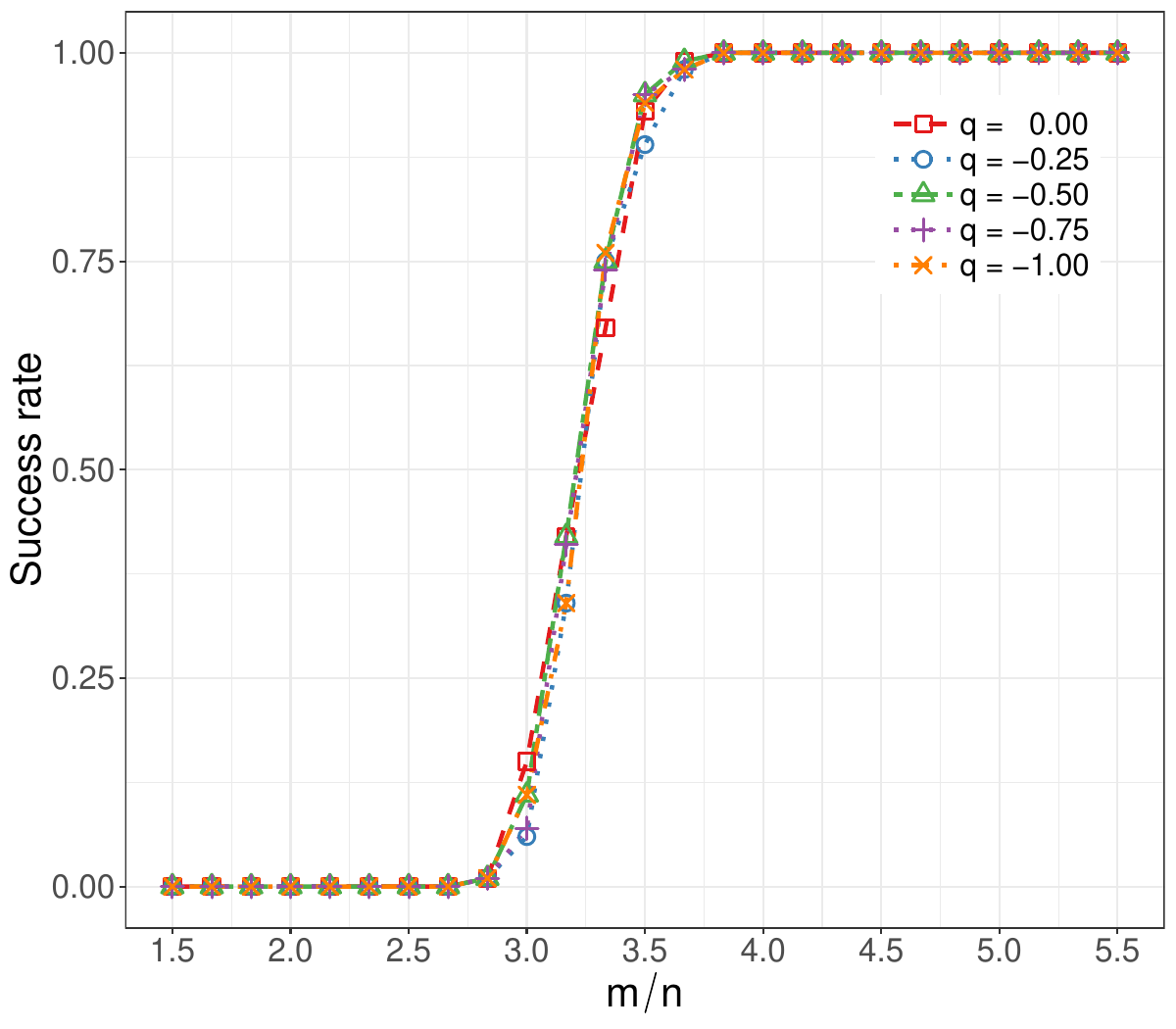}
    \caption{Success transition plot showing the empirical probability of success based on 100 trials with varying number of measurements where the complex random measurement matrices are from the rotation-invariant sub-Gaussian model in Section \ref{sec:rot_exp_family}.}
    \label{fig:phase_transition_any_norm}
\end{figure}

\subsection{Reconstruction of an image}
In this experiment we reconstruct an image via its complex quadratic measurements given by \eqref{eq:quad_eqs}. For image reconstruction the iterative WF reconstruction is terminated if the distance \eqref{eq:compute_min_dist} between successive iterations is less than $10^{-6}$ or if 2500 iterations are completed.\footnote{Note that we use distance rather than \emph{relative} distance. This is a stricter termination criteria when $\|\vx\| > 1$.} The measurement matrices are from Section \ref{sec:rot_exp_family} with $q=0$, which corresponds to a complex random Gaussian measurement model. We reconstruct the three color channels of an image of size $n = 22 \times 15 = 330$ pixels separately when $\frac{m}{n}=4$. Fig. \ref{fig:logos} shows the absolute value of the spectral initialization and the corresponding successful reconstruction.

We define the relative error as  $\frac{\||\vz| - \vx\|}{\|\vx\|}$, where $|\vz|$ is the absolute value of the recovered image and $\vx$ is the original image. We further define relative distance as $\frac{\mathrm{dist}\left(\vx, \vz\right)}{\|\vx\|_2}$. When $\frac{m}{n}=4$, the relative error of the spectral initialization is $0.34$. The relative distances between the three channels of the original image and their respective spectral initializations are $0.53$, $0.49$ and $0.51$. The reconstruction relative error is $4.78 \times 10^{-7}$. The relative distances between the three channels of the original and their respective reconstructions are $5.45 \times 10^{-7}$, $7.73 \times 10^{-7}$ and $8.66 \times 10^{-7}$.

We also run our algorithm from the beginning to reconstruct the same image for varying number of measurements. For each value of $\frac{m}{n}$ we draw a new set of measurement matrices, calculate a new spectral initialization and use the drawn measurement matrices and initialization for Wirtinger flow updates. Fig. \ref{fig:logos} shows a failure case when $\frac{m}{n} = 1$ and Fig. \ref{fig:logo_results} shows the relative distances of the recovered images for each channel.

\begin{figure}[tpb]
    \centering
    \includegraphics[width=0.9\columnwidth]{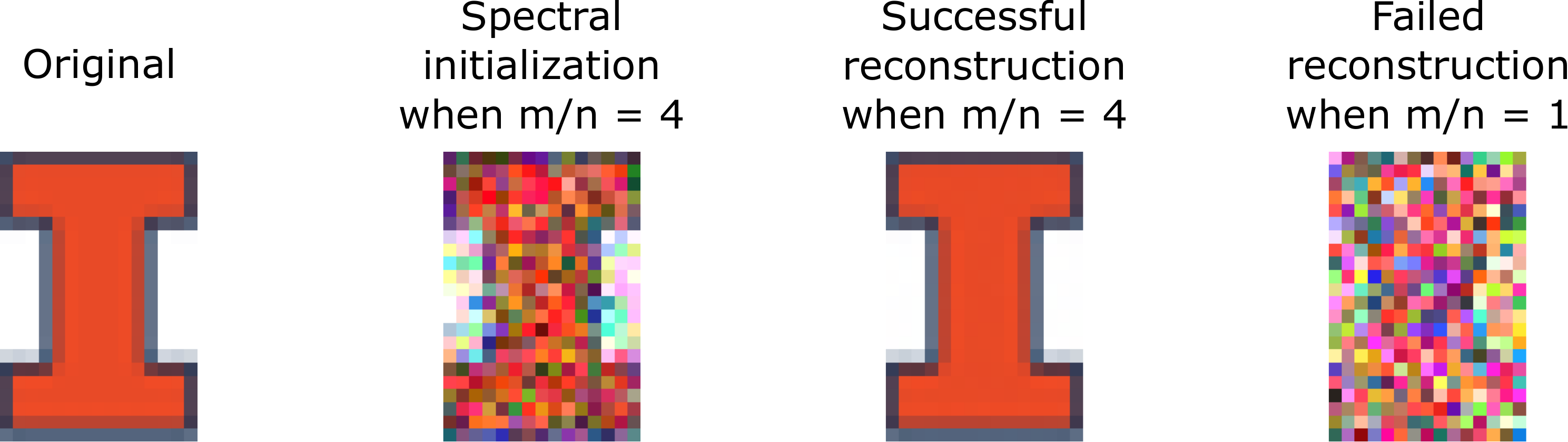}
    \caption{Spectral initialization and the successful reconstruction of the University of Illinois at Urbana-Champaign logo from its complex random quadratic Gaussian measurements when $\frac{m}{n} = 4$. A failed reconstruction when $\frac{m}{n} = 1$ is also shown. The image is of size $n = 22 \times 15 = 330$ pixels.}
    \label{fig:logos}
\end{figure}

\begin{figure}[tpb]
    \centering
    \includegraphics[width=0.9\columnwidth]{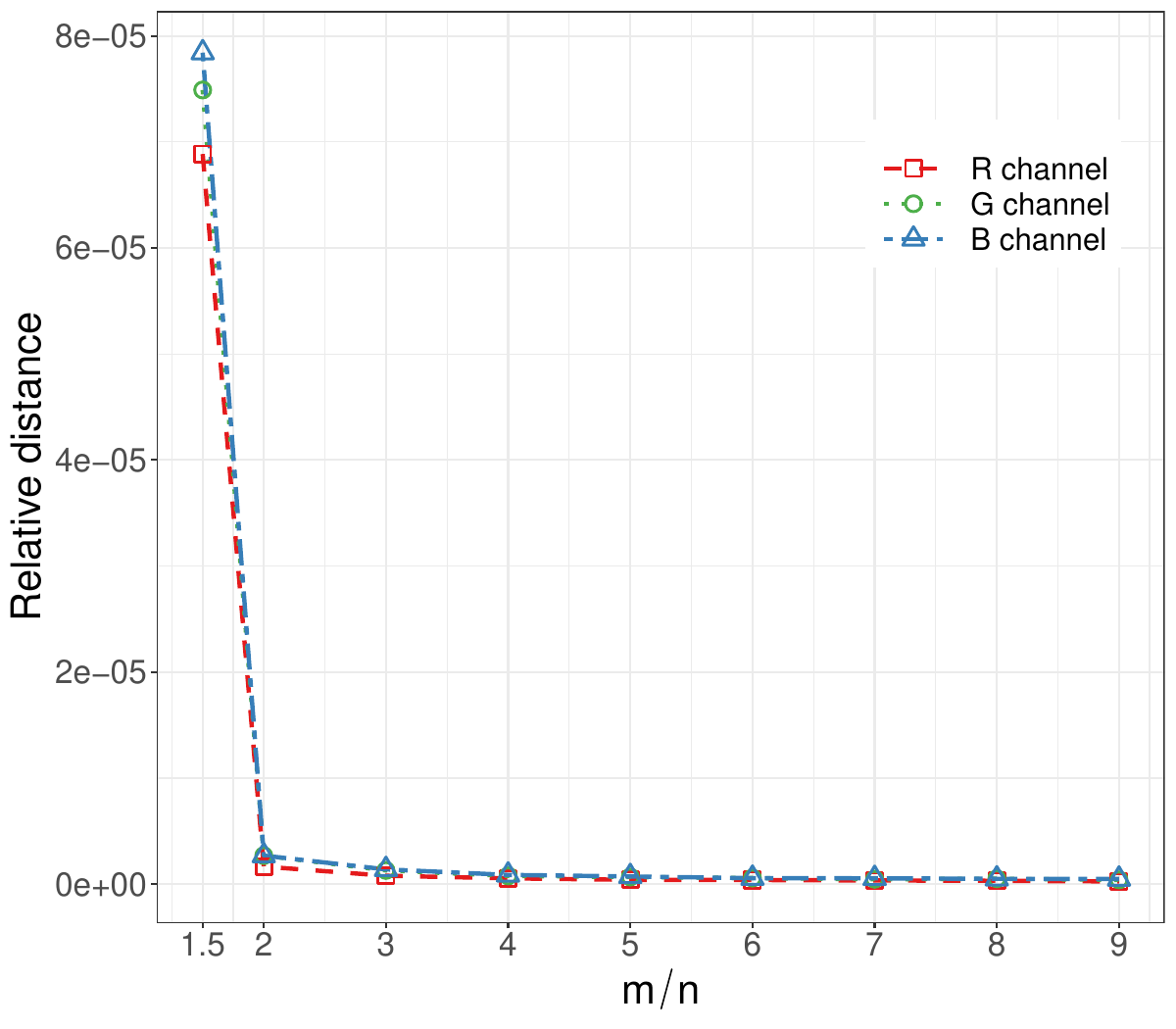}
    \caption{Reconstruction performance of the image from Fig. \ref{fig:logos} with varying number of measurements.}
    \label{fig:logo_results}
\end{figure}

\subsection{Comparison between spectral and random initializations}

\begin{figure*}[tpb]
    \centering
    \subfigure[]{
    \includegraphics[height=0.29\textheight]{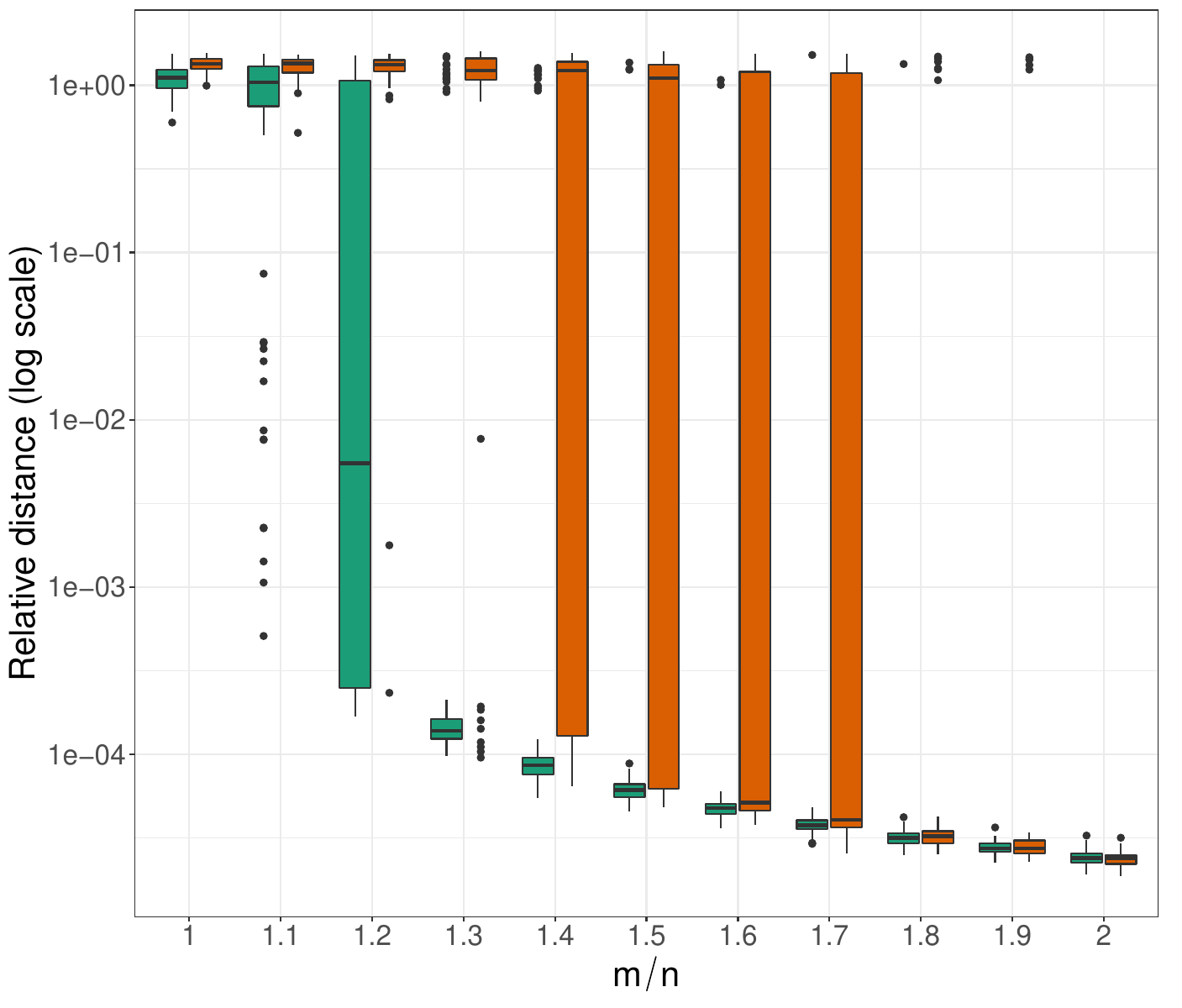}}
    \subfigure[]{
    \includegraphics[height=0.29\textheight]{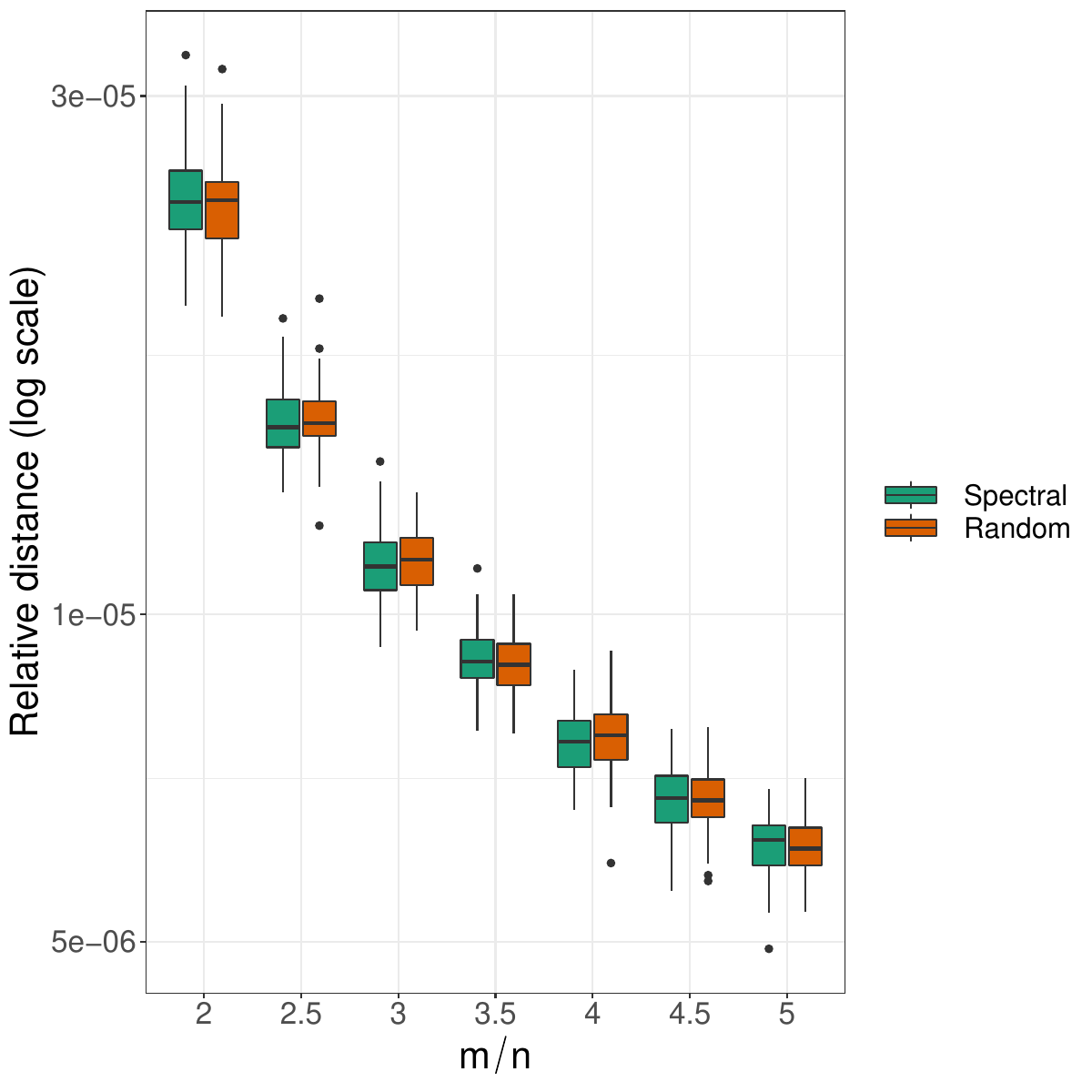}}
    \caption{The relative distances between the recovered signal and the true signal using spectral initialization and random initialization strategies: (a) the sampling rate $\frac{m}{n}\in[1,2]$, (b) the sampling rate $\frac{m}{n}\in[2,5]$.}
    \label{fig:init_comparisons}
\end{figure*}

Random initialization has been shown to be a viable alternative to spectral initialization when solving the phase retrieval problem \cite{Chen2018RandInt}. We next compare the two initialization strategies using the relative distance between the recovered signal and the true signal. Once again we run 100 random trials for each $\frac{m}{n}$ value with $n=100$. We separately analyze the behaviors of the two initializations in the low-oversampling regime, $\frac{m}{n}\in[1,2]$, and the high-oversampling regime, $\frac{m}{n}\in[2,5]$. In each trial we generate a random signal $\vx \in \mathbb{C}^n$ and $m$ standard complex multivariate Gaussian measurement matrices to produce $m$ complex quadratic measurements. We compare our proposed spectral initialization against a random initialization that is generated from a standard complex Gaussian distribution. The spectral initialization is computed using 10 power iterations. Both initializations are scaled so that their norms match the estimated signal norm given by \eqref{eq:unknown_norm}. The same termination criteria is used as in section \ref{subsec:exp_phase_transition}.

The relative distances between the recovered signal and the true signal using different initialization strategies are shown in Fig. \ref{fig:init_comparisons}. We can see that when $1.1\leq\frac{m}{n}\leq1.7$, the spectral initialization with only 10 power iterations performs better than the random initialization. When $\frac{m}{n}\geq1.8$, the two initialization strategies perform almost equally well.

\subsection{Computational efficiency of spectral initialization}
When computing the spectral initialization, the power method is more computationally efficient than a full SVD. In our experiments, we observed that using only 10 power iterations is generally enough to get a good spectral initializer that matches the performance of the exact SVD. In Table \ref{tab:init_comparison} we compare the full SVD and the power method when $\frac{m}{n}=4$ across 5 random trials in terms of the runtimes and the relative distances between the spectral initializers and the true signal. We used up to 24 cores of a system with two 20-core IBM 2.4GHz POWER9 CPUs and up to 115.2GB of RAM. We can see that the power method is preferable since we are only interested in obtaining the leading singular vector.

\begin{table}[tpb]
\renewcommand{\arraystretch}{1.2} 
\centering
\caption{Comparison of SVD and the power method.}
\textbf{}\begin{tabular}{@{}lrrcrr@{}}
\toprule
\multirow{2}{*}{$n$} & \multicolumn{2}{c}{Average time [seconds] } & \phantom{0} & \multicolumn{2}{c}{Average rel. distance} \\ \cmidrule{2-3} \cmidrule{5-6} 
 & SVD & Power && SVD & Power \\ \midrule
2000 & 4.35 & 0.34 && 0.522 & 0.529 \\ 
5000 & 54.83 & 4.35 && 0.518 & 0.524 \\ 
7000 &  115.48 & 7.25 && 0.517 & 0.521 \\ 
10000 &  381.94 & 37.62 && 0.517 & 0.558 \\
\bottomrule
\end{tabular}
\label{tab:init_comparison}
\end{table}

\section{Conclusion}
\label{sec:con_fw}

We addressed the problem of recovering a signal $\vx\in\mathbb{C}^n$ from a system of complex random quadratic equations $y_i=\vx^*\vA_i\vx$, for rotation-invariant sub-Gaussian measurement matrices $\{\vA_i\}_{i=1}^m$. Our analysis complements the existing results on quadratic equations with real measurements and rank-1 positive semidefinite measurement matrices, and extends them to full-rank complex matrices. Since our measurements matrices have uncorrelated entries, the new proofs based on (and including) Lemma \ref{lemma:spectral_norm_concen_all} can be made much simpler than those for phase retrieval, where the entries of measurement matrices are correlated. Our main result has a standard form: we show that when the number of complex measurements exceeds the length of $\vx$ multiplied by some sufficiently large $C$, then with high probability: 1) the spectral initializer $\vz^{(0)}$ is close to a global optimizer; 2) the WF iterates initialized with $\vz^{(0)}$ converge linearly to a global optimizer. Numerical experiments corroborate the theoretical analysis and show that a global optimum can be successfully recovered when $m$ is sufficiently large. 

Recent phase retrieval works showed that a regularized spectral initialization and WF update could improve the robustness and performance of the recovery algorithm \cite{Chen2015Trunc,Wang2018Trunc, wang2017solving}. Chen et al. further proved that vanilla gradient descent with random initialization enjoys favorable convergence guarantees in solving the phase retrieval problem \cite{Chen2018RandInt}. Recent works \cite{Srinadh:2016:GlobalOpt,Park:2017,Ge:2017:Geometric,Li:2018:NonconvexGeo} on the optimization landscape of the low-rank matrix recovery problem give us reason to believe similar optimization landscape could also exist in this case. Our ongoing work involves extending the latest developments in phase retrieval to the aforementioned rotation-invariant sub-Gaussian measurement model. Perhaps more importantly, we hope to adapt our approach to work with quadratic measurements obtained using general high-rank complex matrices that arise in key applications discussed in the introduction.

\appendices

\section{Proofs for rotation-invariant sub-Gaussian measurement model}
\label{app:proofs}

\subsection{Proof of Lemma \ref{lemma:spectral_norm_concen}}
\label{proof:lemma:spectral_norm_concen}
\begin{proof}
Recall that we assume the real and imaginary coefficients of the entries have unit variance, $ \mathbb{E}\big[{A_{i,rc}^{(R)}}^2\big]=\mathbb{E}\big[{A_{i,rc}^{(I)}}^2\big]=1\,.$

For fixed $\vp,\vq$, by rotation invariance, we can choose
\begin{align}
\vp&=\ve_1\\
\vq&=r_1e^{\vj\phi_1}\ve_1+ r_2e^{\vj\phi_2}\ve_2\,,
\end{align}
where $r_1,\ r_2$ are non-negative real numbers satisfying $r_1^2+r_2^2=1$. Let $b_{i}=r_1e^{\vj\phi_1}\overline{A}_{i,11}+r_2e^{\vj\phi_2}\overline{A}_{i,21}$, and $\widetilde{\vA}_i$ denote the matrix $\vA_i$ with the $(1,1)$-th and $(2,1)$-th entries replaced by $0$s. Then
\begin{align}
\label{eq:lemma_2_first}
\begin{split}
&\left\|\frac{1}{m}\sum_{i=1}^m\vp^*\vA_i^*\vq\cdot\vA_i - 2\vq\vp^*\right\|\\
&\leq \left|\frac{1}{m}\sum_{i=1}^mb_iA_{i,11} - 2r_1e^{\vj\phi_1}\right|+ \left|\frac{1}{m}\sum_{i=1}^mb_iA_{i,21} - 2r_2e^{\vj\phi_2}\right|\\
&\quad+\left\|\frac{1}{m}\sum_{i=1}^mb_i\widetilde{\vA}_i\right\|\\
&=|B_1|+|B_2|+\left\|\vH\right\|\\
&\leq |\mathrm{Re}(B_1)|+|\mathrm{Re}(B_2)|+|\mathrm{Im}(B_1)|+|\mathrm{Im}(B_2)|+\|\vH\|\,,
\end{split}
\end{align}
where $\vH=\frac{1}{m}\sum_{i=1}^mb_i\widetilde{\vA}_i$, $B_1=\frac{1}{m}\sum_{i=1}^mb_iA_{i,11} - 2r_1e^{\vj\phi_1}$, $B_2=\frac{1}{m}\sum_{i=1}^mb_iA_{i,21} - 2r_2e^{\vj\phi_2}$, $\mathrm{Re}(B_1)$ denotes the real coefficient of $B_1$ and $\mathrm{Im}(B_1)$ denotes the imaginary coefficient of $B_1$.

For the first term of \eqref{eq:lemma_2_first}, we have:
\begin{align}
\label{eq:spec_con_first_term_ub}
\begin{split}
&\left|\textnormal{Re}\left(B_1\right)\right|\\
&\leq\left|\frac{1}{m}\sum_{i=1}^mr_1\cos\phi_1\left(|A_{i,11}|^2-2\right)\right|+\left|\frac{1}{m}\sum_{i=1}^mF_i\right|\,,\\
\end{split}
\end{align}
where
\begin{align}
\begin{split}
    F_i &= r_2\cos\phi_2\left(A_{i,21}^{(R)}A_{i,11}^{(R)}+A_{i,21}^{(I)}A_{i,11}^{(I)}\right)\\
    &\quad-r_2\sin\phi_2\left(A_{i,21}^{(R)}A_{i,11}^{(I)}-A_{i,21}^{(I)}A_{i,11}^{(R)}\right).
\end{split}
\end{align}
One can verify that $r_1\cos\phi_1\left(|A_{i,11}|^2-2\right)$ is a centered subexponential random variable (cf. \cite[Lemma 5.14]{vershynin_2012}). Similarly, $F_i$ is also a centered subexponential random variable. Using the Bernstein-type inequality \cite[Proposition 5.16]{vershynin_2012}, we have:
\begin{align}
\label{eq:lemma_2_1st_term_ub}
\begin{split}
&\mathrm{Pr}\left(\left|\frac{1}{m}\sum_{i=1}^mr_1\cos\phi_1\left(|A_{i,11}|^2-2\right)\right|\geq\frac{\nu}{12}\right) \\
&\quad \leq 2\exp\left(-m\cdot\min\left\{\frac{c\nu^2}{144K_1^2},\frac{c\nu}{12K_1}\right\}\right)
\end{split}\\
\label{eq:lemma_2_1st_term_ub_2}
\begin{split}
&\mathrm{Pr}\left(\left|\frac{1}{m}\sum_{i=1}^mF_i\right|\geq\frac{\nu}{12}\right)\\
&\quad\leq 2\exp\left(-m\cdot\min\left\{\frac{c\nu^2}{144K_2^2},\frac{c\nu}{12K_2}\right\}\right)\,,
\end{split}
\end{align}
where $c>0$ is some absolute constant and $K_1$ and $K_2$ are the respective subexponential norms. Combining \eqref{eq:spec_con_first_term_ub}, \eqref{eq:lemma_2_1st_term_ub} and \eqref{eq:lemma_2_1st_term_ub_2}, we then have
\begin{align}
\label{eq:lemma_2_bound_one}
\begin{split}
\mathrm{Pr}\left(\left|\textnormal{Re}\left(B_1\right)\right| < \frac{\nu}{6}\right) &\geq 1-4\exp\left(-m\cdot\widehat{C}_1(\nu)\right)\,,
\end{split}
\end{align}
where $\widehat{C}_1(\nu)$ is a constant depending on $\nu$,
\begin{align}
\widehat{C}_1(\nu) = \min\left\{\frac{c\nu^2}{144K_1^2},\frac{c\nu}{12K_1},\frac{c\nu^2}{144K_2^2},\frac{c\nu}{12K_2}\right\}\,.
\end{align}

For the second, third and fourth term of \eqref{eq:lemma_2_first}, we obtain similarly:

\begin{align}
\label{eq:lemma_2_bound_two}
\mathrm{Pr}\left(\left|\textnormal{Re}\left(B_2\right)\right| < \frac{\nu}{6}\right) & \geq 1-4\exp\left(-m\cdot\widehat{C}_2(\nu)\right) \\ 
\label{eq:lemma_2_bound_three}
\mathrm{Pr}\left(\left|\textnormal{Im}\left(B_1\right)\right| < \frac{\nu}{6}\right) & \geq 1-4\exp\left(-m\cdot\widehat{C}_3(\nu)\right) \\
\label{eq:lemma_2_bound_four}
\mathrm{Pr}\left(\left|\textnormal{Im}\left(B_2\right)\right| < \frac{\nu}{6}\right) &\geq1-4\exp\left(-m\cdot\widehat{C}_4(\nu)\right)\,,
\end{align}
where $\widehat{C}_2(\nu),\widehat{C}_3(\nu),\widehat{C}_4(\nu)$ are some constants depending on $\nu$.

To compute an upper bound on the spectral norm $\| \vH \|$ in \eqref{eq:lemma_2_first}, we adapt an approach from \cite[Theorem 5.39]{vershynin_2012}. The idea is to bound $\left|\vu^*\vH\bv\right|$ uniformly for all $\vu,\bv\in\mathbb{C}^n$ on the unit sphere $\mathscr{S}^{n-1}$. In order to take the union bound over $\vu$s and $\bv$s, the unit sphere $\mathscr{S}^{n-1}$ is first discretized using an $\epsilon$-net $\mathscr{N}_\epsilon$ \cite[Definition 5.1]{vershynin_2012}, for $\epsilon \in [0, 1)$. For every fixed pair $(\vu,\bv)$, we establish a high-probability upper bound on $\left|\vu^*\vH\bv\right|$, and then take the union bound over  $(\vu,\bv) \in \mathscr{N}_\epsilon\times\mathscr{N}_\epsilon$, taking care of the adjustments so that the result holds over $\mathscr{S}^{n-1} \times \mathscr{S}^{n-1}$.

\paragraph{Approximation} We first bound the error of approximating $\| \vH \|$ using a $(\vu, \bv)$ from $\mathscr{N}_\epsilon \times \mathscr{N}_\epsilon$. Suppose $\vu_1,\bv_1\in\mathscr{S}^{n-1}$ is chosen such that $\|\vH\|=|\langle \vH\bv_1,\vu_1\rangle|$, and choose $\vu_2,\bv_2\in\mathscr{N}_\epsilon$ that approximate $\vu_1,\bv_1$ as $\|\vu_1-\vu_2\|_2\leq\epsilon$, $\|\bv_1-\bv_2\|_2\leq\epsilon$. We get
\begin{align}
\begin{split}
&|\langle\vH\bv_1,\vu_1\rangle-\langle\vH\bv_2,\vu_2\rangle|\\
&=|\vu_1^*\vH(\bv_1-\bv_2)+(\vu_1^*-\vu_2^*)\vH\bv_2|\\
&\leq\|\vH\|\|\vu_1\|_2\|\bv_1-\bv_2\|_2+\|\vH\|\|\vu_1-\vu_2\|_2\|\bv_2\|_2\\
&\leq 2\epsilon\|\vH\|\,.
\end{split}
\end{align}
It follows that
\begin{align}
|\langle\vH\bv_2,\vu_2\rangle|\geq(1-2\epsilon)\cdot\|\vH\|\,.
\end{align}
Taking the maximum over all $\vu_2,\bv_2$ in the above inequality, we obtain the bound
\begin{align}
\label{eq:spec_norm_lemma}
\|\vH\|\leq (1-2\epsilon)^{-1}\cdot\max_{(\vu_2,\bv_2)\in\mathscr{N}_\epsilon\times\mathscr{N}_\epsilon}|\langle\vH\bv_2,\vu_2\rangle|\,.
\end{align}
We now choose $\epsilon=\frac{1}{4}$. According to \cite[Lemma 5.2]{vershynin_2012}, there exists a $\frac{1}{4}$-net with cardinality $\left|\mathscr{N}_{\sfrac{1}{4}}\right|\leq 9^n$. Since we are maximizing over $(\vu,\bv)\in\mathscr{N}_{\sfrac{1}{4}}\times\mathscr{N}_{\sfrac{1}{4}}$, the total cardinality is bounded as $\left|\mathscr{N}_{\sfrac{1}{4}}\right|^2\leq 81^n$ so that
\begin{align}
\label{eq:bound_spectral_norm_net}
\|\vH\|\leq \max_{(\vu,\bv)\in\mathscr{N}_{\sfrac{1}{4}}\times\mathscr{N}_{\sfrac{1}{4}}}2|\vu^*\vH\bv|\,.
\end{align}

\paragraph{Concentration} 
For a fixed $(\vu,\bv) \in \mathscr{N}_{\sfrac{1}{4}}\times\mathscr{N}_{\sfrac{1}{4}}$, we have:
\begin{align}
\label{eq:lemma_2_second_ub}
\begin{split}
\left|\frac{1}{m}\sum_{i=1}^mb_i\vu^*\widetilde{\vA}_i\bv\right|&=\left|\frac{1}{m}\sum_{i=1}^mb_i\cdot G_i\right|\\
&\leq\left|\frac{1}{m}\sum_{i=1}^m\left(b_i^{(R)}G_i^{(R)}-b_i^{(I)}G_i^{(I)}\right)\right|\\
&\quad+\left|\frac{1}{m}\sum_{i=1}^m\left(b_i^{(R)}G_i^{(I)}+b_i^{(I)}G_i^{(R)}\right)\right|\,,
\end{split}
\end{align}
where $G_i=\sum_{kl}\overline{u}_kv_l\cdot\widetilde{\vA}_{i,kl}$. For the first term of \eqref{eq:lemma_2_second_ub}, $b_i^{(R)}$ and $b_i^{(I)}$ are linear combinations of the real and imaginary coefficients of $A_{i,11}$ and $A_{i,21}$. On the other hand, $G_i^{(R)}$ and $G_i^{(I)}$ are linear combinations of the coefficients of the entries in $\widetilde{\vA}_i$ that do not contain $A_{i,11}$ and $A_{i,21}$. We can check that $b_i^{(R)}G_i^{(R)}-b_i^{(I)}G_i^{(I)}$ is a centered subexponential random variable as before. Using the Bernstein-type inequality\cite[Proposition 5.16]{vershynin_2012}, we have:
\begin{align}
\label{eq:77_u1}
\begin{split}
&\mathrm{Pr}\left(\left|\frac{1}{m}\sum_{i=1}^mb_i^{(R)}G_i^{(R)}-b_i^{(I)}G_i^{(I)}\right|\geq\frac{\nu}{12}\right)\\
&\leq 2\exp\left(-m\cdot\min\left\{\frac{c\nu^2}{144K_{3}^2},\frac{c\nu}{12K_{3}}\right\}\right)\,,
\end{split}
\end{align}
where $c>0$ is some absolute constant and $K_{3}$ is the subexponential norm of $b_i^{(R)}G_i^{(R)}-b_i^{(I)}G_i^{(I)}$. We get a similar result for the second term of \eqref{eq:lemma_2_second_ub},
\begin{align}
\label{eq:77_u2}
\begin{split}
    &\mathrm{Pr}\left(\left|\frac{1}{m}\sum_{i=1}^mb_i^{(R)}G_i^{(I)}+b_i^{(I)}G_i^{(R)}\right|\geq\frac{\nu}{12}\right)\\
    &\leq 2\exp\left(-m\cdot\min\left\{\frac{c\nu^2}{144K_{4}^2},\frac{c\nu}{12K_{4}}\right\}\right)\,,
\end{split}
\end{align}
where $c>0$ is some absolute constant and $K_{4}$ is the subexponential norm of $b_i^{(R)}G_i^{(I)}+b_i^{(I)}G_i^{(R)}$.

Combining \eqref{eq:lemma_2_second_ub},\eqref{eq:77_u1},\eqref{eq:77_u2}, we have
\begin{align}
\begin{split}
&\mathrm{Pr}\left(2\left|\frac{1}{m}\sum_{i=1}^mb_i\cdot G_i\right| < \frac{\nu}{3}\right) \geq 1-4\exp\left(-m\cdot\widehat{C}_5(\nu)\right),
\end{split}
\end{align}
where $\widehat{C}_5(\nu)$ is a constant depending on $\nu$ and the measurement model,
\begin{align}
    \widehat{C}_5(\nu)=\min\left\{ \frac{c\nu^2}{144K_{3}^2},\ \frac{c\nu}{12K_{3}},\ \frac{c\nu^2}{144K_{4}^2},\ \frac{c\nu}{12K_{4}} \right\}\,.
\end{align}

\paragraph{Union bound} Taking the union bound over all unit vectors $(\vu,\bv) \in\mathscr{N}_{\sfrac{1}{4}}\times\mathscr{N}_{\sfrac{1}{4}}$ with cardinality $\left|\mathscr{N}_{\sfrac{1}{4}}\right|^2\leq 81^n$, if $m\geq Cn$,
\begin{align}
\begin{split}
&\mathrm{Pr}\left(\max_{\vu,\bv\in\mathscr{N}_{\sfrac{1}{4}}\times\mathscr{N}_{\sfrac{1}{4}}} 2\left|\frac{1}{m}\sum_{i=1}^mb_i\vu^*\widetilde{\vA}_i\bv\right| \geq \frac{\nu}{3}\right)\\
&\leq 81^n\cdot 4\exp\left(-m\cdot\widehat{C}_5(\nu)\right)\\
&\leq 4\exp\left(-m\cdot\left(\widehat{C}_5(\nu)-C^{-1}\ln 81\right)\right)\,.
\end{split}
\end{align} 
Using \eqref{eq:spec_norm_lemma}, we have
\begin{align}
\label{eq:lemma_2_bound_five}
\begin{split}
&\mathrm{Pr}\left(\left\|\frac{1}{m}\sum_{i=1}^mb_i\widetilde{\vA}_i\right\| < \frac{\nu}{3}\right)\\
&\geq \mathrm{Pr}\left(\max_{\vu,\bv\in\mathscr{N}_{\sfrac{1}{4}}\times\mathscr{N}_{\sfrac{1}{4}}} 2\left|\frac{1}{m}\sum_{i=1}^mb_i\vu^*\widetilde{\vA}_i\bv\right| < \frac{\nu}{3}\right)\\
&\geq 1-4\exp\left(-m\cdot\left(\widehat{C}_5(\nu)-C^{-1}\ln 81\right)\right)\,.
\end{split}
\end{align}
If $m\geq Cn$, using \eqref{eq:lemma_2_first} and combining all the bounds so far \eqref{eq:lemma_2_bound_one}, \eqref{eq:lemma_2_bound_two}, \eqref{eq:lemma_2_bound_three}, \eqref{eq:lemma_2_bound_four}, \eqref{eq:lemma_2_bound_five}, we get
\begin{align}
\label{eq:finalbound-lemma1}
\begin{split}
&\mathrm{Pr}\left(\left\|\frac{1}{m}\sum_{i=1}^m\vp^*\vA_i^*\vq\cdot\vA_i - 2\vq\vp^*\right\| < \nu\right)\\
& \geq 1-20\exp\big(-m\cdot C_1(C,\nu)\big)\,,
\end{split}
\end{align}
where $C_1(C,\nu)$ depends on $C$ and $\nu$, but not on $m$.
\begin{align}
\begin{split}
C_1(C,\nu)= &\min\left\{\widehat{C}_1(\nu),\ \widehat{C}_2(\nu),\ \widehat{C}_3(\nu),\ \widehat{C}_4(\nu),\right.\\
&\quad\quad\quad\quad\quad\quad\quad\left.\ \widehat{C}_5(\nu)-C^{-1}\ln81\right\}\,.
\end{split}
\end{align}
When $C$ is sufficiently large, we have that $C_1(C,\nu)>0$ so that both sides of \eqref{eq:finalbound-lemma1} go to $1$ as $m \to \infty$.

\end{proof}

\subsection{Proof of Lemma \ref{lemma:initialization}}
\label{proof:lemma:initialization}

\begin{proof}
Let $\{\vu_0,\bv_0\}$ be the leading left and right singular vectors of $\vS$, and $\tau_0$ be its largest singular value. Using Lemma \ref{lemma:spectral_norm_concen}, the following holds with probability at least $1-20\exp\big(-m\cdot C_1(C,\delta)\big)$:
\begin{align}
\label{eq:proof-lemma-ineq1}
\begin{split}
\left|\tau_0-\vu_0^* (2\vx\vx^*) \bv_0\right|&=\left|\vu_0^*\left(\vS-2\vx\vx^*\right)\bv_0\right|\\
&\leq\|\vS-2\vx\vx^*\|\\
&\leq\delta\|\vx\|_2^2\,.
\end{split}
\end{align}
Hence, on this event, $\vu_0^* (2\vx\vx^*) \bv_0\geq\tau_0-\delta\|\vx\|_2^2$ and we have with at least the same probability that
\begin{align}
\begin{split}
\tau_0&\geq\frac{1}{\|\vx\|_2^2}\vx^*\vS\vx\\
&=\frac{1}{\|\vx\|_2^2}\vx^*\left(\vS-2\vx\vx^*\right)\vx+2\|\vx\|_2^2\\
&\geq (2-\delta)\|\vx\|_2^2\,.
\end{split}
\end{align}

In the following proof the spectral initializer $\vz^{(0)}$ is constructed from $\bv_0$. When $\vz^{(0)}$ is constructed from $\vu_0$, the proof is similar.
\begin{enumerate}
\item When the signal norm $\| \vx \|_2$ is known, we can choose $\vz^{(0)}=\|\vx\|_2\cdot\bv_0$ as the spectral initializer. Since $|\vu_0^*\vx|\leq \|\vx\|_2$, we have 
\begin{align}
\label{eq:lemma_2_first_ineq_known}
\begin{split}
2\left|\vx^*\bv_0\cdot\|\vx\|_2\right|&\geq 2|\vu_0^*\vx|\cdot|\vx^*\bv_0|\\
&\geq \vu_0^* (2\vx\vx^*) \bv_0\\
&\geq\tau_0-\delta\|\vx\|_2^2\\
&\geq (2-2\delta)\|\vx\|_2^2\,.
\end{split}
\end{align}
Using \eqref{eq:compute_min_dist}, the squared distance between the spectral initializer $\vz^{(0)}=\|\vx\|_2\bv_0$ and $\vx$ is then bounded as
\begin{align}
\begin{split}
\mathrm{dist}^2\left(\|\vx\|_2\bv_0,\vx\right)
&=\min_{\phi\in(0,2\pi]}\left\|\|\vx\|_2\bv_0 -\vx e^{\vj\phi}\right\|_2^2\\
&=\|\vx\|_2^2\|\bv_0\|_2^2+\|\vx\|_2^2-2|\vx^*\bv_0\cdot\|\vx\|_2|\\
&\leq 2\|\vx\|_2^2-(2-2\delta)\|\vx\|_2^2 \\
&= 2\delta\|\vx\|_2^2\leq\frac{51}{24}\delta\|\vx\|_2^2\,,
\end{split}
\end{align}
with probability at least $1-20\exp\big(-m\cdot C_1(C,\delta)\big)$.

\item When the signal norm $\|\vx\|_2$ is unknown, we estimate it from $R=\frac{1}{2m}\sum_{i=1}^m\overline{y}_iy_i$. By rotation invariance of the sub-Gaussian matrix $\vA_i$, we can simply assume $\vx=\|\vx\|_2\ve_1$ so that
\begin{align}
R=\frac{\|\vx\|_2^4}{2m}\sum_{i=1}^m|A_{i,11}|^2\,.
\end{align}
Using \eqref{eq:lemma_2_1st_term_ub} in the proof of Lemma \ref{lemma:spectral_norm_concen} ($\phi_1=0$ in this case), we know that on the same event on which \eqref{eq:proof-lemma-ineq1} holds (of probability $\geq 1-20\exp\big(-m\cdot C_1(C,\delta)\big)$), it also holds that
\begin{align}
\label{eq:est_signal_norm}
\left(2-\frac{\delta}{12}\right)m\leq\sum_{i=1}^m|A_{i,11}|^2\leq\left(2+\frac{\delta}{12}\right)m\,.
\end{align}
We thus have
\begin{align}
\left(1-\frac{\delta}{24}\right)\|\vx\|_2^4\leq R\leq\left(1+\frac{\delta}{24}\right)\|\vx\|_2^4\,,
\end{align}
and choose the spectral initializer as $\vz^{(0)}=\sqrt[4]{R}\bv_0$. Assuming $\delta\in(0,24)$ and using \eqref{eq:lemma_2_first_ineq_known}, we have that
\begin{align}
\label{eq:bd_signal_unknown}
\begin{split}
&\mathrm{dist}^2\left(\sqrt[4]{R}\bv_0,\vx\right)\\
&\leq\sqrt[2]{R}+\|\vx\|_2^2-(2-2\delta)\sqrt[4]{R}\|\vx\|_2\\
&\leq\sqrt[2]{1+\frac{\delta}{24}}\|\vx\|_2^2+\|\vx\|_2^2-2(1-\delta)\sqrt[4]{1-\frac{\delta}{24}}\|\vx\|_2^2\\
&\leq\left(1+\frac{\delta}{24}\right)\|\vx\|_2^2+\|\vx\|_2^2-2(1-\delta)\left(1-\frac{\delta}{24}\right)\|\vx\|_2^2\\
&\leq \frac{51}{24}\delta\|\vx\|_2^2\,,
\end{split}
\end{align}
with probability at least $1-20\exp\big(-m\cdot C_1(C,\delta)\big)$.
\end{enumerate}
\end{proof}

\subsection{Proof of Lemma \ref{lemma:spectral_norm_concen_all}}
\label{proof:lemma:spectral_norm_concen_all}
\begin{proof}
Let $\vG(\vp,\vq) := \frac{1}{m}\sum_{i=1}^m\vp^*\vA_i^*\vq\cdot\vA_i-2\vq\vp^*$. We prove that the bound on the spectral norm $\|\vG(\vp,\vq)\|$ in \eqref{eq:spectral_norm_concen_all} holds with high probability \emph{for all} unit vectors $\vp,\vq$ by combining Lemma \ref{lemma:spectral_norm_concen} with yet another union bound. 
Let $\vp_1,\vq_1\in\mathscr{S}^{n-1}$ such that
\begin{align}
    \|\vG(\vp_1,\vq_1)\|=\max_{(\vp,\vq) \in \mathscr{S}^{n-1}\times\mathscr{S}^{n-1}}\|\vG(\vp,\vq)\|\,.
\end{align}
Let $\vp_2,\vq_2\in\mathscr{N}_\epsilon$ further obey $\vp_1,\vq_1$ as $\|\vp_1-\vp_2\|_2\leq\epsilon$, $\|\vq_1-\vq_2\|_2\leq\epsilon$ (they exist by the definition of an $\epsilon$-net). We can write
\begin{align}
\begin{split}
    &\|\vG(\vp_1,\vq_1)-\vG(\vp_2,\vq_2)\| \\ &=\|\vG(\vp_1,\vq_1-\vq_2)+\vG(\vp_1-\vp_2,\vq_2)\|\\
    &\leq\|\vG(\vp_1,\vq_1-\vq_2)\|+\|\vG(\vp_1-\vp_2,\vq_2)\|\\
    &=\|\vq_1-\vq_2\|_2\cdot\left\|\vG\left(\vp_1,\frac{\vq_1-\vq_2}{\|\vq_1-\vq_2\|_2}\right)\right\|\\
    &\quad+\|\vp_1-\vp_2\|_2\cdot\left\|\vG\left(\frac{\vp_1-\vp_2}{\|\vp_1-\vp_2\|_2},\vq_2\right)\right\|\\
    &\leq\big(\|\vq_1-\vq_2\|_2+\|\vp_1-\vp_2\|_2\big)\cdot\|\vG(\vp_1,\vq_1)\|\\
    &\leq2\epsilon\|\vG(\vp_1,\vq_1)\|\,,
\end{split}
\end{align}
so that
\begin{align}
    \|\vG(\vp_1,\vq_1)\|\leq(1-2\epsilon)^{-1}\|\vG(\vp_2,\vq_2)\|\,.
\end{align}
Taking the maximum over all $\vp_2,\vq_2$ in the above inequality, we get
\begin{align}
\|\vG(\vp_1,\vq_1)\| \leq (1-2\epsilon)^{-1} \max_{\vp_2,\vq_2\in\mathscr{N}_\epsilon\times\mathscr{N}_\epsilon}\|\vG(\vp_2,\vq_2)\|\,.
\end{align}
We again choose $\epsilon=\frac{1}{4}$ so that as in Lemma \ref{lemma:spectral_norm_concen}, $|\mathscr{N}_{\sfrac{1}{4}}\times\mathscr{N}_{\sfrac{1}{4}}| = |\mathscr{N}_{\sfrac{1}{4}}|^2 \leq 81^n$ and
\begin{align}
\label{eq:spec_lemma_all_unit_vec}
    \|\vG(\vp_1,\vq_1)\| \leq \max_{\vp_2,\vq_2\in\mathscr{N}_{\sfrac{1}{4}}\times\mathscr{N}_{\sfrac{1}{4}}} 2\|\vG(\vp_2,\vq_2)\|.
\end{align}

By Lemma \ref{lemma:spectral_norm_concen}, if $m>Cn$ for some sufficiently large $C$,
\begin{align}
    \mathrm{Pr}\left(\left\|\vG(\vp_2,\vq_2)\right\|\geq\frac{\nu}{2}\right)\leq 20\exp\left(-m\cdot C_1\left(C,\frac{\nu}{2}\right)\right)\,,
\end{align}
for fixed unit vectors $\vp_2,\vq_2$. To get a result which holds for all $\vp$ and $\vq$, we take the union bound over $\mathscr{N}_{\sfrac{1}{4}}\times\mathscr{N}_{\sfrac{1}{4}}$. For $m\geq Cn$,
\begin{align}
\begin{split}
    &\mathrm{Pr}\left(\max_{\vp_2,\vq_2\in\mathscr{N}_\epsilon\times\mathscr{N}_\epsilon} 2\|\vG(\vp_2,\vq_2)\|\geq \nu\right)\\
    &\leq 81^n\cdot 20\exp\left(-m\cdot C_1\left(C,\frac{\nu}{2}\right)\right)\\
    &\leq 20\exp\left(-m\cdot\left[C_1\left(C,\frac{\nu}{2}\right)-C^{-1}\ln81\right]\right)\,.
\end{split}
\end{align}
Using \eqref{eq:spec_lemma_all_unit_vec}, we have
\begin{align}
\begin{split}
    &\mathrm{Pr}\left( \text{\Large $\forall$} \vp, \vq \in \mathscr{S}^{n-1}, \|\vG(\vp, \vq)\| < \nu \right)\\
    &= \mathrm{Pr}\left(\|\vG(\vp_1,\vq_1)\|<\nu\right) \\ &\geq \mathrm{Pr}\left(\max_{(\vp_2,\vq_2)\in\mathscr{N}_\epsilon\times\mathscr{N}_\epsilon} 2\|\vG(\vp_2,\vq_2)\|< \nu\right)\\
    &\geq 1-20\exp\big(-m\cdot C_2(C,\nu)\big)\,,
\end{split}    
\end{align}
where $C_2(C,\nu) := C_1\left(C,\frac{\nu}{2}\right)-C^{-1}\ln81 > 0$ for $C$ sufficiently large.

\end{proof}

\subsection{Example of rotation-invariant distributions}
\label{app:ri_exp_fam}

For completeness, we exhibit here one family of rotation-invariant sub-Gaussian distributions. Consider the random variable $\vu=\|\vs\|_2^q\cdot\vs$, where $q\in(0,1)$, and $\vs\in\mathbb{R}^d\sim\mathcal{N}(\vzero,\vI)$, $\vs\neq\vzero$. We have:
\begin{align}
    \vs=\|\vu\|_2^{-\frac{q}{q+1}}\cdot\vu\,.
\end{align}

The entries of the Jacobian matrix $\vJ = d \vs / d \vu$ are given as
\begin{align}
    &\frac{\partial s_i}{\partial u_i}=\|\vu\|_2^{-\frac{q}{q+1}}\left(1-\frac{q}{q+1}\|\vu\|_2^{-2}\cdot u_i^2\right)\\
    &\frac{\partial s_i}{\partial u_j}=\|\vu\|_2^{-\frac{q}{q+1}}\left(-\frac{q}{q+1}\|\vu\|_2^{-2}\cdot u_iu_j\right),\quad i\neq j\,.
\end{align}
The Jacobian matrix $\vJ$ is thus
\begin{align}
    \vJ=\|\vu\|_2^{-\frac{q}{q+1}}\left(\vI-\frac{q}{q+1}\|\vu\|_2^{-2}\cdot\vu\vu^\mathrm{T}\right)\,,
\end{align}
with the determinant given by
\begin{align}
\begin{split}
    \mathrm{det}(\vJ)&=\|\vu\|_2^{-\frac{qd}{q+1}}\mathrm{det}\left(\vI-\frac{q}{q+1}\|\vu\|_2^{-2}\cdot\vu\vu^\mathrm{T}\right)\\
    &=\|\vu\|_2^{-\frac{qd}{q+1}}\left(1-\frac{q}{q+1}\|\vu\|_2^{-2}\cdot\vu^\mathrm{T}\vu\right)\\
    &=\frac{1}{q+1}\|\vu\|_2^{-\frac{qd}{q+1}}\,.
\end{split}
\end{align}
We obtain the expression for the pdf as
\begin{align}
    p(\vu)&=\frac{1}{q+1}\|\vu\|_2^{-\frac{qd}{q+1}}(2\pi)^{-\frac{d}{2}}\exp\left(-\frac{1}{2}\|\vu\|_2^{\frac{2}{q+1}}\right)\,.
\end{align}
We can see that $p(\vu)$ only depends on the norm $\|\vu\|_2$, and is thus invariant under unitary transform.  For $q = -1$ we obtain a uniform distribution on the sphere, while $q = 0$ gives the Gaussian distribution. In all cases, when $q \in [-1, 0]$, it is easy to check that the moments of the random variables generated as above are suitably bounded (the tails decay faster than the Gaussian) so that these random variables are sub-Gaussian.

\bibliographystyle{IEEEbib}
\bibliography{refs}

\end{document}